\title{Budget-Smoothed Analysis for Submodular Maximization\thanks{We thank Eric Balkanski and Matt Weinberg for interesting discussions and comments on earlier drafts.}}
\author{Aviad Rubinstein\thanks{Supported by NSF CCF-1954927, and a David and Lucile Packard Fellowship.}\\Stanford University\\\texttt{aviad@cs.stanford.edu} \and Junyao Zhao\thanks{Supported by NSF CCF-1954927.}\\ Stanford University\\\texttt{junyaoz@stanford.edu}}
\date{}
\renewcommand \R {\mathbb{R}}
\renewcommand \E {\mathbb{E}}
\def \ie {{i.e.}}
\def \eps {\varepsilon}
\def \cD {\mathcal{D}}
\def \cP {\mathcal{P}}
\def \cR {\mathcal{R}}
\newtheorem{theorem}{Theorem}[section]
\newtheorem{lemma}[theorem]{Lemma}
\newtheorem{proposition}[theorem]{Proposition}
\theoremstyle{definition}
\newtheorem{definition}[theorem]{Definition}
\newtheorem*{remark*}{Remark}
\newtheorem{claim}[theorem]{Claim}
\newtheorem{example}[theorem]{Example}
\newtheorem{observation}[theorem]{Observation}
\def \fglb {f^{(\textrm{greedy-lb})}}
\begin{document}

\maketitle

\begin{abstract}
The greedy algorithm for monotone submodular function maximization subject to cardinality constraint is guaranteed to approximate the optimal solution to within a $1-1/e$ factor. Although it is well known that this guarantee is essentially tight in the worst case --- for greedy and in fact any efficient algorithm, experiments show that greedy performs better in practice. We observe that for many applications in practice, the empirical distribution of the budgets (i.e., cardinality constraints) is supported on a wide range, and moreover, all the existing hardness results in theory break under a large perturbation of the budget.

To understand the effect of the budget from both algorithmic and hardness perspectives, we introduce a new notion of {\em budget-smoothed analysis}. We prove that greedy is optimal for {\em every} budget distribution, and we give a characterization for the worst-case submodular functions. Based on these results, we show that on the algorithmic side, under realistic budget distributions, greedy and related algorithms enjoy provably better approximation guarantees, that hold even for worst-case functions, and on the hardness side, there exist hard functions that are fairly robust to all the budget distributions.
\end{abstract}

\setcounter{page}{0}
\thispagestyle{empty}
\newpage

\section{Introduction}

Monotone submodular function maximization subject to a cardinality constraint is a fundamental problem in combinatorial optimization with a wide variety of applications including feature selection, sensor placement, influence maximization in social networks, document summarization, etc. (see e.g.~\cite{KG14} and references therein). We will use influence maximization in social networks as a running example: an advertiser has a limited budget of $k$ free product samples that she wishes to distribute to seed consumers, who will then propagate the news about the product to their friends, then their friends' friends, etc. 
The standard approach to this problem~\cite{KKT15} models the expected final reach of the campaign as a monotone submodular function $f:\cP([n]) \rightarrow \mathbb{R}$ of the set of seed consumers (where $[n]$ is the set of all users in the network).
The goal of the optimization problem is to find a set of $k$ seed consumers that (approximately) maximizes $f$. 

Classic work shows that the simple greedy algorithm achieves a $1-1/e$-approximaiton to the optimal solution in the worst case~\cite{NemhauserWF78}. Furthermore, this bound is tight for algorithms that make sub-exponential queries to the function~\cite{NemhauserW78,Vondrak13}; 
and even succinctly representable functions (e.g. simple models of influence propagation on a social network graph) do not allow better approximation algorithms unless $\P=\NP$~\cite{Feige98}.
In theory, this tight characterization of the optimal approximation factor is very satisfying.

Given the importance of this problem in practical applications, it is also interesting to ask what is the optimal approximation factor that can be obtained on realistic instances. As one can expect, the performance of the greedy algorithm tends to be significantly better in practice (e.g.~\cite{TSP20, BQS21}). 
When reasoning about real-world instances, there is a natural tradeoff between quality and generality of the guarantees: at one extreme, worst-case analysis only gives a $(1-1/e)$-approximation but applies to every instance; at the other extreme we could, in principle%
\footnote{In general calculating the approximation factor on a real-world instance requires computing the value of the optimal solution. The recent work~\cite{BQS21} provides an instance-specific method to estimate the optimal value.}, empirically evaluate the performance of the greedy algorithm on each instance of interest, but we would have to redo this for every new instance.
Ideally, we want to extend the classic worst-case model --while making minimal assumptions-- to explain why efficient algorithms like greedy should obtain better-than-$(1-1/e)$-approximation in practice. 

Coming up with useful and realistic assumptions about submodular functions continues to be an interesting and active topic of research. In Section~\ref{sec:related} we survey several natural restrictions, including recent success stories that allow for improved approximation algorithms~\cite{KL14,SviridenkoVW17,BRS16,HorelS16,Yoshida16,CRV17,TSP20,STY20}. Deferring details for later, we argue that the bottom line of this discussion is that submodular functions are complex  objects, and as such modeling their beyond-worst-case behavior is tricky and application-dependent, and moreover, it is often intractable to verify the model assumptions in practice.

In this work, we consider beyond-worst-case analysis (and hardness) of submodular maxmization from a novel perspective by focusing on modeling the average-case behavior of a much simpler object: the cardinality constraint.
As we now explain, our approach of perturbing the cardinality constraint is motivated by both theory and practice.

In practice, many applications of submodular maximization have multiple users with the same or similar objective but various budgets (i.e., cardinality constraints). In the example about influence maximization, multiple advertisers could advertise and propagate on the same social network and hence maximize essentially the same, possibly worst-case, submodular function. However, their budgets can easily vary by an order of magnitude or more because of different sizes of business or different amounts of funds. A concrete example is the distribution of the campaign budgets of the candidates in the 2020 Democratic Party primary elections~\cite{2020election}. Thus even if the social network/submodular function is worst-case, the ``average'' advertiser uses an ``average'' budget which is independent of the social network/submodular function. In a different example about feature selection, the engineers wish to make predictions in the testing phase using a small subset from the high-dimensional feature space that is selected during the training phase, and in the training phase, they apply the same standard machine learning model (e.g., linear regression, logistic regression) to the same standard datasets (e.g., ImageNet~\cite{ImageNet}) and hence optimize the same monotone (approximately) submodular objective~\cite{DK11,EKDN18}. However, the number of features they want to choose can easily range from one hundred to one million depending on the computational power they have or the model complexity they prefer. Therefore, from a practical point of view, it is interesting to understand whether the average-case behavior of the budget makes the problem of submodular maximization easier to some extent.

In theory, all the known worst-case instances for cardinality-constrained monotone submodular maximization~\cite{NemhauserWF78,Vondrak13,Feige98} are sensitive to large budget perturbations: even outputting a random solution achieves $\gtrsim0.95$ approximation, when we perturb (i.e., multiply) the cardinality constraint by a significant multiplicative factor\footnote{Clearly, tiny perturbations of the constraint cannot escape the hardness of approximation results, because by submodularity, a $(1+\eps)$-multiplicative perturbation in budget cannot affect the value of the solution by more than a $(1+\eps)$-factor.} like $0.1$ or $10$.
Hence, from a theoretical point of view, it is interesting to investigate the effect perturbing the cardinality constraint has on the hardness of approximation results.

With the above motivations from theory and practice, we initiate the study of submodular function maximization in a semi-adversarial setting, where the (empirical) distribution of the cardinality constraints is supported on a wide range (e.g. $[x,10x]$), from both the algorithmic and the hardness perspectives. Namely, we hope to answer the question of whether a large random perturbation of the cardinality constraint allows efficient algorithms to achieve higher optimal approximation ratio or there is a stronger hardness result that is robust to any such perturbations.

To formalize this question, we propose a simple and elegant framework called {\em budget-smoothed analysis}. The name is inspired by the celebrated smoothed analysis for linear programming (LP)~\cite{SpielmanT04}.
Admittedly, the analogy is not perfect: we consider much larger perturbations than smoothed analysis for LP.
However, much like smoothed analysis for LP, the generative process of random perturbations for budget-smoothed analysis for submodular maximization is grounded in concrete applications (such as viral marketing and feature selection -- see discussion above).

\subsection*{The budget-smoothed analysis model}
We study monotone submodular function maximization subject to a cardinality constraint in the following semi-adversarial setting:
\begin{definition}[Budget-smoothed analysis] \hfill

\begin{enumerate}
\item The distribution $\tilde{\cD}$ of budgets (e.g. uniform over $[x,10x]$) is given as input to the adversary.
\item The adversary chooses a (monotone submodular) function $f$.
\item The cardinality constraint $k \sim \tilde{\cD}$ is drawn at random and given as input to the algorithm.
\item The algorithm (approximately) maximizes $f$ over all sets of size at most $k$.
\end{enumerate}
For any distribution $\tilde{\cD}$, we're interested in the expected ratio $\cR_{ALG}(f,\tilde{\cD})$ between the value obtained by the algorithm and the optimal solution,
$$ \cR_{ALG}(f,\tilde{\cD}):=\E_{k \sim \tilde{\cD}}\left[\frac{f(ALG)}{f(OPT)}\right].$$
\label{def:budget_smoothed}
\end{definition}



For notational convenience, we make the following change to the above model: Rather than sampling the budget from a distribution, each instance will be characterized by a base budget $k_0$ and a {\em budget perturbation distribution} $\cD$, with the final cardinality constraint being $k:= \rho \cdot k_0$ for $\rho \sim \cD$. This will allow us to talk about a distribution $\cD$ like ``uniform over $[x,10x]$'' while studying the asymptotic complexity as the instance size and cardinality constraint go to infinity.

The budget-smoothed analysis model has the following advantages --- First, it is simple and clean in theory, which not only provides a formal setup for studying our aforementioned question but also has the flexibility to be integrated with other models that make beyond-worst-case assumptions about the submodular functions. Second, it is easy-to-apply in practice, since calculating the (empirical) distribution of the budgets is much more tractable than verifying the complex assumptions about the submodular functions. 

\subsection*{Our results}
We revisit the classic problem of monotone submodular maximization in our model of budget-smoothed analysis, and in particular, we investigate the following fundamental questions:
\begin{description}
\item[Question 1:] What is the optimal efficient algorithm?
\item[Question 2:] What are the worst-case instances for an arbitrary budget distribution?
\item[Question 3:] What are the optimal approximation factors for the budget distributions that are supported on a wide range (e.g. $[x,10x]$) and what is the best budget distribution?
\end{description}

\begin{remark*}
All the hardness results below hold both in the black-box oracle model (for any algorithm that makes a subexponential number of queries), or assuming $\P\neq\NP$ in the computational model for coverage functions (on a polynomial-size graph).
\end{remark*} 

\paragraph{Result 1 (main theorem): Optimal approximation algorithms}
Our main theorem shows that a large class of algorithms that are (near-)optimal in the classic setting continue to obtain (near-)optimal approximation factors under budget-smoothed analysis for {\em any} distribution (Theorem~\ref{thm:greedy_is_optimal} and Observation~\ref{obs:opt_algorithms}).
This class includes the classic greedy algorithms, as well as (variants of) recent efficient parallel algorithms, and Map-Reduce algorithms (Appendix~\ref{sec:practical_algorithms}). In particular, these algorithms are optimal even in comparison to algorithms that know the budget perturbation distribution $\cD$. In other words, they intrinsically adjust themselves to the budget distribution optimally. 
The proof of the main theorem relies on a characterization of the worst-case instances in the model of budget-smoothed analysis, and therefore, we completely answer Questions 1 and 2.

The main theorem and Question 3 set up a win-win scenario for us: either we explain (a fraction of) the success of greedy for some interesting distributions, or we get a stronger hardness result that hold against all the budget distributions. Either way, we would bring new insights to the classic problem of submodular maximization.

Applying the main theorem, we manage to give partial answers (Result 2 and 3) to Question 3 on both positive and negative sides.

\paragraph{Result 2: Optimal approximation factors}
For any budget perturbation distribution $\cD$, we formulate a simple (but non-convex) mathematical program (Section~\ref{sec:simulation}) that computes the optimal possible approximation factor for a given budget distribution. We also include some numerical estimates for natural distributions (Table~\ref{table:simulation_result}). For the special case of $\cD$ supported on two budgets, we also give a closed-form solution (Proposition~\ref{prop:two_budget}).

These results are interesting on both positive and negative sides --- On the positive side, the optimal approximation ratios have modest but non-negligible improvements for many interesting distributions {\em even for worst-case submodular functions}, which explain a fraction of the success of greedy algorithms. On the negative side, we pin down the worst-case instances for these distributions which remain significantly hard to approximate (and studying these instances might provide new insights about the structure of beyond-worst-case submodular functions).

\paragraph{Result 3: Bounding the best-case budget distribution}
We also prove that for every budget distribution and any efficient algorithm, the optimal budget-smoothed analysis approximation
factor is bounded away from 1, and in particular, it is at most 0.9087 (Theorem~\ref{thm:lower_bound}). 

It is worth mentioning that because the program in Result 2 is non-convex, we are only able to compute the optimal approximation factors after discretizing the budget distributions with a limited number of budgets, and thus, the positive results may still have a lot of room to improve\footnote{For comparison,~\cite{SpielmanT04}'s original polynomial upper bound for smoothed analysis (of a non-trivial variant) of the Simplex algorithm was $\tilde{O}(d^{55}n^{86}\sigma^{-30} + d^{70}n^{86})$ iterations ($d$ is the number of variables, $n$ is the number of constraints, and $\sigma^2$ is the variance of the perturbation), which is significantly improved now~\cite{DH18}. That said, in light of Result 3, we do not expect the positive side of budget-smoothed analysis to fully explain the success of greedy in practice, but explaining a greater fraction of success or showing a more robust hardness result would still be interesting.}. This leaves an interesting open problem: close the gap and give a complete answer to Question 3.

\begin{table}
\begin{center}
\caption{\label{table:simulation_result} Empirical Results}
\begin{tabular}{ |c|c| } 
 \hline
{\bf  Budget perturbation distribution} & \textbf{Worst-case approximation ratio} \\

 \hline
 Baseline (no perturbation) &  0.6321\\
 \hline
  \hline
 Uniform over $[1,10]$ & 0.6675 \\
 Log-scale-uniform over $[1,10]$ & 0.6674 \\ 
 Log-scale-uniform over $[1,600]$ & 0.6808 \\ 
 \hline
Top 10 social/political campaigns on Facebook & 0.6625 \\
 2020 Democratic presidential candidates & 0.6727 \\
 \hline
\end{tabular}
\end{center}
We calculated tight approximation factors for worst-case monotone submodular functions for several exemplary budget distributions. See Section~\ref{sub:numerical} for details. 
\end{table}

\subsection{Broader discussion}
Our model of budget-smoothed analysis introduces a new (and more tractable) angle for studying beyond-worst-case analysis and average-case hardness. We believe that there are countless future directions and applications to explore in the broader field of TCS.
To exhibit this breadth of possibilities, we mention a couple of preliminary results that we have for other problems that fit into our new model:
\begin{itemize}
\item Submodular maximization subject to {\em knapsack constraint}. While the optimal $1-1/e$ factor can again be recovered in polynomial time, the state-of-the-art algorithms for this problem are still not completely satisfying~\cite{Sviridenko04,EN19,NS20}, and the greedy algorithm does not provide any non-trivial approximation guarantee. Our preliminary results show that with budget-smoothed analysis, greedy guarantees a constant factor approximation with knapsack constraint, and in fact to date we haven't been able to rule out $1-1/e$ approximation (or better).
\item Budget-feasible mechanism design: this is a well-studied problem in algorithmic game theory~\cite{singer2010budget,chen2011approximability,DobzinskiPS11,BadanidiyuruKS12,SingerM13,ChanC14,EG14,GoelNS14,HorelIM14,BalkanskiH16,ChanC16,NushiS0K16,ZhaoLM16,ZhengWGZTC17,LeonardiMSZ17,anari2018budget,KhalilabadiT18,AmanatidisKS19,gravin2019optimal,LiZY20}. Under a large market assumption~\cite{anari2018budget} obtain a mechanism with optimal approximation guarantee, incidentally also $1-1/e$. Our preliminary results show that this mechanism does {\em not} improve at all under budget-smoothed analysis. However, the budget-smoothed analysis inspires a {\em new mechanism} that is not only optimal for every budget distribution but also instance-optimal among a canonical class of mechanisms, and it also significantly outperforms~\cite{anari2018budget}'s mechanism on realistic distributions empirically. 
\end{itemize}
In hindsight, although the performance improvement guaranteed by budget-smoothed analysis is relatively moderate, we believe that the optimality of an algorithm in the model of budget-smoothed analysis is a theoretical evidence that the algorithm is not just worst-case optimal but also likely to perform favorably on realistic instances. In other words, budget-smoothed analysis offers an analytically approachable beyond-worst-case performance test for the worst-case optimal algorithms of budget-constrained problems, which helps us identify (or design) the ``right'' algorithm among various worst-case optimal algorithms.

\subsection{Roadmap}
In Section~\ref{sec:related} we survey several other approaches to beyond-worst-case submodular maximization. In Section~\ref{sec:greedy} we prove our core technical result
, namely that greedy obtains optimal approximation factors for any distribution; in Appendix~\ref{sec:practical_algorithms} we extend this result to other related algorithms. Henceforth, we build on these techniques; in particular we simply analyze the approximation factors of the greedy algorithm. In Section~\ref{sec:bound}, we prove 
that the optimal approximation factor is bounded away from $1$ for any distribution. In Section~\ref{sec:simulation}, we 
characterize the optimal approximation factor by a program, which we then use to simulate several exemplary distributions. 

\subsection{Beyond-worst-case submodular functions}\label{sec:related}
Due to the popularity of submodular maximization in practice, there is a lot of interest in understanding and designing algorithms for ``typical'' cases.
We discuss a few approaches below. 
We note that our model of beyond-worst-case {\em cardinality constraint} is orthogonal to any assumptions about the submodular function, and in principle could be combined with any of them to obtain even stronger results.

The model most closely in spirit to our smoothed-analysis-like approach is to take a worst-case submodular function and perturb it with random noise. 
The most straightforward way of doing this is independently perturbing the value of the function for each set. Unfortunately, this breaks the submodularity, which makes the problem significantly {\em harder}, even for small perturbations:~\cite{HS17} barely recovers the $1-1/e$ approximation factor in this setting (under further restrictions and with a technically involved algorithm).

Another approach is to consider {\em coverage functions}, an important special class of monotone submodular functions. This restriction has been successful for learning submodular functions~\cite{BalcanCIW12, BadanidiyuruDFKNR12, FeldmanK14}, but Feige's \NP-hard instance already rules out efficient algorithms with improved approximation ratios for this case.
One may combine this restriction with perturbations of the weights of the elements of the ground set; but it is not hard to show that Feige's instance can be made robust even to very large amounts of noise.
Another alternative is to consider special classes of graphs, e.g., power law, small-world, or triangle-dense that are common for social networks ~\cite{WS98,GRS16}. But again Feige's instance either already satisfies all of those, or can be adapted to do so.

Another popular restriction of monotone submodular functions is {\em bounded curvature}~\cite{CC84}, which restricts the extent to which ground elements interact; this indeed allows for better algorithms with applications to e.g.~maximum entropy sampling~\cite{CC84,SviridenkoVW17,BRS16,HorelS16,Yoshida16}. But bounded curvature seems too restrictive for applications like influence in social networks and consumers' valuations with diminishing returns%
\footnote{If, for example, we already selected all of a node's neighbors, the  marginal contribution of adding this node is diminished to zero. For consumers' valuations, the marginal contribution of, e.g. the one-thousandth apple, is again diminished to essentially zero. This means that curvature is unbounded in both settings (see~\cite{CC84,SviridenkoVW17} for formal definitions).}.

To cope with the limited applicability of curvature, the original paper of \cite{CC84} also defined a relaxed notion of {\em greedy curvature}, which only restricts the interaction between elements selected by the greedy algorithm and elements in the optimal solution. In exciting recent work,~\cite{TSP20} define various notions of {\em sharpness} which only restricts the interactions of the average element of the optimal solution. Both greedy curvature and sharpness parameters suffer from the disadvantage that they may be intractable to compute (both are assumptions about interaction of elements with the optimal solutions, and if we knew the optimal solution...). Moreover, due to their complicated form it's hard to heuristically reason about their fit for any particular application. Nevertheless, on the positive side both are more realistic than vanilla curvature assumption, and combining them with our budget-smoothed analysis model is an interesting direction for future research.

\cite{CRV17} study submodular maximization under a {\em stability} assumption, i.e.~they assume that the optimal subset does not change when the function is perturbed. 
\cite{TSP20} argue that in the context of submodular maximization, stable instances may fail to capture significant interaction between elements. As in the case of greedy curvature and sharpness, it is also not clear how to compute the stability of a function, or reason about instances that we expect to be stable.

Finally, one setting that is both natural and allows for improved approximation factors is influence maximization in undirected graphs~\cite{KL14, ST19, STY20}. Specifically,~\cite{KL14} prove that the greedy algorithm obtains a $(1-1/e+\varepsilon)$-approximation (for some small unspecified constant $\varepsilon>0$) for the independent cascade model on undirected graph. \cite{STY20} show that in the linear threshold model the greedy algorithm does not beat the $(1-1/e)$-approximation factor (by any constant, in the worst case). 

\section{Preliminaries}
\begin{definition}
A function $f:2^{V}\to\R_{\ge0}$ is submodular if for all $S\subseteq T\subseteq V$ and $i\in V\setminus T$, $f(S\cup \{i\})-f(S)\ge f(T\cup \{i\})-f(T)$, where $V$ is called ground set. Moreover, we denote the marginal gain by $f(X\mid S):=f(X\cup S)-f(S)$.
\end{definition}
We make the following conventions in this paper---When we say \textbf{``efficient algorithm''}, we mean polynomial time algorithms in the general computation model assuming $\P\neq\NP$, or algorithms using sub-exponential number of function queries in the oracle query model. Moreover, we consider \textbf{continuous distribution of budget perturbations} $\mathcal{D}$, and we let $\mathcal{D}(k)$ denote the distribution of budgets in which a budget is sampled by multiplying a random perturbation factor $\rho\sim \mathcal{D}$ with $k$ (if $\rho\cdot k$ is fractional, we can round it to an integer). Furthermore, following Definition~\ref{def:budget_smoothed}, we denote 
$
    \cR_{ALG}(\cD(k)):=\min_{f}\cR_{ALG}(f,\cD(k))\,\,\textrm{s.t. $f$ is monotone and submodular}.
$

The hard instances in our analysis can be built on top of either Feige's max-$k$-cover instances~\cite{Feige98} or Vondr\'ak's hard instances~\cite{Vondrak13}. In the following theorem, we summarize useful properties of these two hardness results.
\begin{theorem}\label{thm:basic_hardness}
There exists a class of monotone submodular functions $\mathcal{C}$ such that for every $\epsilon>0$, for any efficient algorithm $\mathcal{A}$ that given a submodular function $f$ and an integer $l$ outputs a set $X_l$ of cardinality $l$, for every sufficiently large $k$ that grows with size of instance, there is a submodular function $f_k\in\mathcal{C}$ such that
\begin{enumerate}[label=(\roman*)]
    \item for all $l\le k$, $f_k(O_l)=(l/k)f_k(O_k)$, where $O_l$ is the optimal set that maximizes $f_k$ among all cardinality-$l$ sets, and
    \item for all $l$, $f_k(X_l)\le (1-e^{-l/k}+\epsilon)f_k(O_k)$.
\end{enumerate}
\end{theorem}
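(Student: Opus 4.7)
The plan is to instantiate $\mathcal{C}$ using the classical hard constructions of Feige~\cite{Feige98} in the computational model and Vondr\'ak~\cite{Vondrak13} in the oracle model, with a deliberate choice of parameters so that the planted optimum has a clean partition structure. Specifically, I would take the ground set to be a collection of ``sets'' over a universe $U$ with $|U|$ divisible by $k$; the planted $k$ optimal sets $S_1,\dots,S_k$ are pairwise disjoint and exactly tile $U$, and the remaining ``distractor'' sets are drawn from the standard hard distribution (random covers for the Feige reduction, or the two-level symmetric construction for Vondr\'ak). Property~(i) then drops out by design: for any $l \le k$, the $l$-set consisting of $l$ planted pieces has value $l \cdot (|U|/k) = (l/k) f_k(O_k)$ and is optimal by submodularity and disjointness. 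In Vondr\'ak's symmetric construction, the analogous partition-into-blocks structure gives the same identity.

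For property~(ii), I would appeal to the standard indistinguishability argument underlying both hardness proofs, but track the dependence on $l$ more carefully. In the oracle model, Vondr\'ak's symmetry-gap argument shows that any algorithm making a sub-exponential number of queries cannot distinguish $f_k$ from a symmetric version $\bar f_k$ whose value depends only on the cardinality and satisfies $\bar f_k(l) \le (1-(1-1/k)^l)\,|U| \le (1-e^{-l/k})\,|U|$; hence its output $X_l$ on $f_k$ must also satisfy $f_k(X_l) \le (1-e^{-l/k}+\epsilon)\,|U|$. In the computational model, the random distractor sets in Feige's construction have the property that each fixed $l$-set covers at most $(1-(1-1/k)^l)\,|U| + \epsilon |U|$ elements with probability $1-\exp(-\Omega(|U|))$; a polynomial-time algorithm cannot distinguish planted from distractor sets (by the NP-hardness of the underlying Label-Cover gap), so its output on any $l$-set cannot beat this average bound either.

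The main subtlety will be uniformity over $l$: we need a single function $f_k$ that is simultaneously hard for the algorithm at every cardinality $l$. In the oracle model this is essentially automatic, since the symmetrization lemma is a statement about $f_k$ itself and yields the bound for all $l$ at once. In the computational model it requires a union bound over the relevant values of $l$; the bound is only nontrivial for $l \lesssim k \log(1/\epsilon)$ (beyond that, $1-e^{-l/k}+\epsilon \ge 1$ and property~(ii) holds trivially because $f_k(X_l) \le f_k(V) = f_k(O_k) = |U|$), and for each such $l$ the per-instance failure probability is exponentially small in $|U|$, so a union bound over polynomially many $l$ produces a single $f_k$ that defeats the fixed algorithm at every cardinality. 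A second minor point is to verify that the same $f_k$ lies in both the Feige and Vondr\'ak families so that one class $\mathcal{C}$ serves both models; this is immediate since the coverage function arising from Feige's construction is itself a valid input to the oracle model.
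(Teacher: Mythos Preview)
The paper does not actually prove this theorem: it is stated in the Preliminaries as a summary of the classical hardness results of Feige~\cite{Feige98} and Vondr\'ak~\cite{Vondrak13}, with no argument given beyond the citation. Your plan---instantiate $\mathcal{C}$ by those constructions, read property~(i) off the partition structure of the planted optimum, and obtain property~(ii) from the standard indistinguishability/symmetry-gap argument together with a union bound over the relevant $l$---is precisely what that citation is meant to unpack, so you are aligned with the paper. (One small slip worth fixing: the inequality $1-(1-1/k)^l \le 1-e^{-l/k}$ points the wrong way, since $1-1/k \le e^{-1/k}$; however the gap is $O(1/k)$ and is absorbed by the $\epsilon$ once $k$ is large, so the argument survives.)
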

Next, we state a standard lemma for greedy analysis.
\begin{lemma}\label{lem:greedy_one_step}
Given a monotone submodular $f$, we let $X_k$ and $O_k$ denote the greedy solution and the optimal solution of cardinality $k$, respectively. Then, for all $i,k> 0$, $f(X_i)-f(X_{i-1})\ge \frac{1}{k}(f(O_{k})-f(X_{i-1}))$.
\end{lemma}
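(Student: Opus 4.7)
The plan is to follow the standard greedy-argument template, keying off the fact that at step $i$ the greedy algorithm picks the single element of maximum marginal gain relative to $X_{i-1}$. So my first step will be to write
$$f(X_i) - f(X_{i-1}) \;=\; \max_{e \in V \setminus X_{i-1}} f(e \mid X_{i-1}),$$
and in particular this quantity is at least the average marginal gain taken over any set of at most $k$ elements outside $X_{i-1}$. The natural choice of such a set is $O_k \setminus X_{i-1}$, which has size at most $k$.

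The second step is to upper bound $f(O_k) - f(X_{i-1})$ by the sum of these marginals. By monotonicity, $f(O_k) \le f(O_k \cup X_{i-1})$, and by writing the union as a telescoping chain built by adding the elements of $O_k \setminus X_{i-1}$ one at a time on top of $X_{i-1}$, we get
$$f(O_k \cup X_{i-1}) - f(X_{i-1}) \;=\; \sum_{e \in O_k \setminus X_{i-1}} f\bigl(e \,\bigm|\, X_{i-1} \cup P_e\bigr),$$
where $P_e$ is the prefix of elements preceding $e$ in an arbitrary enumeration. Submodularity then lets me drop the extra conditioning sets $P_e$, bounding each term by $f(e \mid X_{i-1})$.

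Combining the two steps, there must exist an element $e^\star \in O_k \setminus X_{i-1}$ with
$$f(e^\star \mid X_{i-1}) \;\ge\; \tfrac{1}{|O_k \setminus X_{i-1}|}\bigl(f(O_k) - f(X_{i-1})\bigr) \;\ge\; \tfrac{1}{k}\bigl(f(O_k) - f(X_{i-1})\bigr),$$
where the second inequality uses $|O_k \setminus X_{i-1}| \le k$. Since greedy picks the overall maximizer, $f(X_i)-f(X_{i-1})$ is at least $f(e^\star \mid X_{i-1})$, which gives the claim. A minor edge case is when $O_k \subseteq X_{i-1}$: then by monotonicity $f(X_{i-1}) \ge f(O_k)$, so the right-hand side is non-positive, while the left-hand side is non-negative, so the inequality is trivial. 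There is no real obstacle here; this is the textbook one-step greedy lemma, so the ``hard part'' is merely bookkeeping the application of submodularity to strip the conditioning sets $P_e$ cleanly.
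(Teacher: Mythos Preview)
Your argument is correct and is essentially the same as the paper's: both use that greedy's marginal beats the average marginal of $O_k$'s elements over $X_{i-1}$, apply submodularity to bound the sum of marginals below by $f(O_k\cup X_{i-1})-f(X_{i-1})$, and use monotonicity to pass to $f(O_k)-f(X_{i-1})$. The paper just sums over all of $O_k$ (with zero contribution from elements already in $X_{i-1}$) rather than restricting to $O_k\setminus X_{i-1}$, but this is a cosmetic difference.
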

\begin{proof}
Let $x_i$ denote the $i$-th element selected by greedy. It holds that
\begin{align*}
    f(X_i)-f(X_{i-1})&=f(x_i\mid X_{i-1}) \ge \frac{1}{k}\cdot\sum_{o\in O_{k}} f(o\mid X_{i-1})\\
    &\ge \frac{1}{k}\cdot f(O_{k}\mid X_{i-1})\ge \frac{1}{k}(f(O_{k})-f(X_{i-1})),
\end{align*}
where the first inequality is by greedy selection, the second is by submodularity, and the third is by monotonicity.
\end{proof}

\section{Greedy is Optimal}\label{sec:greedy}

In this section, we prove our core technical result: greedy is optimal for submodular maximization with respect to arbitrary distribution of budget perturbations.
\begin{theorem}\label{thm:greedy_is_optimal}
For any distribution of budget perturbations $\mathcal{D}$, for every $\epsilon'>0$, for any efficient algorithm $\mathcal{A}$, for every sufficiently large\footnote{The implicit dependence of $k$ on $\epsilon$ in Theorem~\ref{thm:basic_hardness} carries over to the dependence of $k$ on $\epsilon'$ in this statement, and therefore, we keep such dependence implicit, and we are mostly interested in the asymptotic result.} $k$ that grows with the size of instance, it holds that $\cR_{\mathcal{A}}(\mathcal{D}(k))\le(1+\epsilon')\cR_{\textrm{greedy}}(\mathcal{D}(k))$.
\end{theorem}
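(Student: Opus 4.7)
The plan is to construct a single hardness instance $f^*$ on which (a) no efficient algorithm can meaningfully beat greedy, and (b) greedy itself is pinned to (essentially) its worst-case expected ratio. First I would pick a small internal error $\epsilon''>0$, calibrated against $\epsilon'$ and a moment of $\cD$ such as $\E_{\rho\sim\cD}[\max(1/\rho,1)]$. Applying Theorem~\ref{thm:basic_hardness} \emph{simultaneously} to $\mathcal{A}$ and to the greedy algorithm -- both are efficient -- produces a single $f^*\in\mathcal{C}$ satisfying $f^*(\mathcal{A}(l)),\, f^*(X_l^{\textrm{greedy}})\le(1-e^{-l/k}+\epsilon'')\,f^*(O_k)$ for every budget $l$, together with the linear optimum profile $f^*(O_l)=(l/k)f^*(O_k)$ for $l\le k$.

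The second step is a matching greedy \emph{lower} bound on $f^*$. Lemma~\ref{lem:greedy_one_step} with the choice $k'=k$ telescopes to $f^*(X_l^{\textrm{greedy}})\ge(1-(1-1/k)^l)f^*(O_k)\ge (1-e^{-l/k})f^*(O_k)-o_k(1)$, which combined with the hardness upper bound pins greedy's value to $(1-e^{-l/k})f^*(O_k)\pm\epsilon''f^*(O_k)$ on $f^*$. Subtracting, $f^*(\mathcal{A}(l))-f^*(X_l^{\textrm{greedy}})\le 2\epsilon''f^*(O_k)$ for every $l$. Dividing by $f^*(O_l)$ and taking expectation over $l\sim\cD(k)$, the per-budget ratio gap is bounded by $2\epsilon''\cdot\E_{\rho\sim\cD}[\max(1/\rho,1)]$, which can be made an arbitrarily small fraction of $\cR_{\textrm{greedy}}(\cD(k))$ by shrinking $\epsilon''$.

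The main obstacle is the final step: ensuring $\cR_{\textrm{greedy}}(f^*,\cD(k))\le(1+\epsilon')\cR_{\textrm{greedy}}(\cD(k))$, i.e., that our chosen hardness instance \emph{also} serves as a near-worst case for greedy itself. Rather than fixing the hardness scale at the baseline $k$, I would reparameterize: invoking Theorem~\ref{thm:basic_hardness} with hardness scale $m$ yields a one-parameter family $\{f^{(m)}\}$, and the explicit expression for $\cR_{\textrm{greedy}}(f^{(m)},\cD(k))$ from the preceding step makes picking $m^*$ minimizing it a one-dimensional optimization. The technical core is a matching lower bound for greedy on an arbitrary monotone submodular $f$: one must show $\cR_{\textrm{greedy}}(f,\cD(k))\ge \min_m\cR_{\textrm{greedy}}(f^{(m)},\cD(k))-O(\epsilon')$. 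The intuition is that Lemma~\ref{lem:greedy_one_step} at parameter $k'$ gives $f(X_l^{\textrm{greedy}})/f(O_l)\ge (1-e^{-l/k'})f(O_{k'})/f(O_l)$, whose worst case over $f$ is attained precisely when the curve $l\mapsto f(O_l)$ is linear up to a threshold -- the defining feature of $f^{(m)}$ in Theorem~\ref{thm:basic_hardness}(i). Chaining all steps yields $\cR_\mathcal{A}(\cD(k))\le \cR_\mathcal{A}(f^{(m^*)},\cD(k))\le \cR_{\textrm{greedy}}(f^{(m^*)},\cD(k))+O(\epsilon'')\le(1+\epsilon')\cR_{\textrm{greedy}}(\cD(k))$.
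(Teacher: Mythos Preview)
Your proposal has a genuine gap in Step~3. The claim that the one-parameter family $\{f^{(m)}\}$ of single-scale hard instances realizes (up to $O(\epsilon')$) the worst case for greedy, i.e.\ that $\min_m \cR_{\textrm{greedy}}(f^{(m)},\cD(k)) \approx \cR_{\textrm{greedy}}(\cD(k))$, is false in general. Your supporting intuition invokes Lemma~\ref{lem:greedy_one_step} at a \emph{single} parameter $k'$; but greedy's actual lower bound at each step is the \emph{maximum} over all $k'$ of $\frac{1}{k'}(f(O_{k'})-f(X_{i-1}))$, and the dominating $k'$ changes as $f(X_{i-1})$ grows. These transition points---parameterized by the $r_j$'s in the paper---depend on the full profile $(f(O_{k_1}),\dots,f(O_{k_m}))$, not merely on a single threshold. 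A single $f^{(m)}$ has optimum profile $f^{(m)}(O_l)=\min(l/m,1)\cdot f^{(m)}(O_m)$, on which the best guarantee is always $\frac{1}{m}(f(O_m)-X)$ with \emph{no} transitions; this is too rigid to match the worst case.

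Concretely, take the uniform distribution on two budgets $k_1=k_2/2$. Over all single instances $f^{(m)}$, the minimum expected greedy ratio is $\frac{1}{2}\big(\tfrac{1-e^{-1/2}}{1/2}+(1-e^{-1})\big)\approx 0.710$, attained at $m=k_2$. But the paper's two-component instance $f=\alpha f_{k_1}+f_{k_2-k_1}$ (Appendix~\ref{sec:2-budgets}) with, say, $\alpha=1.5$ gives expected greedy ratio $\approx 0.658$, strictly smaller. Thus $\cR_{\textrm{greedy}}(\cD(k))$ is achieved not by a single $f^{(m)}$ but by a weighted combination of disjoint-support basic hard instances at \emph{different} scales---precisely the construction in Theorem~\ref{thm:many_budgets}. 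The paper's proof identifies the relevant parameters ($r_j$'s) from an arbitrary bad-for-greedy instance, builds a matching multi-scale hard instance with weights $\alpha_j$ so that greedy's per-step gain on it exactly matches the greedy lower bound $\fglb$, and then shows greedy's budget split is the unique optimizer of the resulting concave program, so no efficient $\mathcal{A}$ can do better. That multi-scale construction is the missing idea in your proposal.
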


Theorem~\ref{thm:greedy_is_optimal} follows directly from Theorem~\ref{thm:many_budgets} using a discretization argument. 

\begin{theorem}\label{thm:many_budgets}
For any perturbation factors $0<\rho_1<\rho_2<\dots<\rho_{m}$, for every $\epsilon>0$, there exists a sufficiently large $k$ that grows with the size of instance such that given $m$ budgets $k_1=\rho_1\cdot k$, \dots, $k_{m}=\rho_{m}\cdot k$, for any monotone submodular function $f^{\textrm{(bad-for-greedy)}}$, for any efficient algorithm $\mathcal{A}$, there exists a monotone submodular function $f$ such that
\begin{description}
    \item[(i) Greedy is (almost) no worse than $\mathcal{A}$ on $f$:] for all $i\in[m]$, the solution $Y_{k_i}$ computed by $\mathcal{A}$ for budget $k_i$ has value $f(Y_{k_i})\le (1+\epsilon)f(X_{k_i})$, where $X_{k_i}$ is the greedy solution for budget $k_i$,
    \item[(ii) $f$ is as hard as $f^{\textrm{(bad-for-greedy)}}$ for greedy:] for all $i\in[m]$, given budget $k_i$, the approximation ratio of greedy on $f$ is at most $1+\epsilon$ times the approximation ratio of greedy on $f^{\textrm{(bad-for-greedy)}}$.
\end{description}
\end{theorem}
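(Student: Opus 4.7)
The plan is to build $f$ as a disjoint sum of hard instances from Theorem~\ref{thm:basic_hardness}, one per budget, tuned so that greedy's trajectory on $f$ reproduces the values $g_i:=f^{\textrm{(bad-for-greedy)}}(X^{\textrm{(bad)}}_{k_i})$ at each budget $k_i$, while the global optima of $f$ reproduce $o_i:=f^{\textrm{(bad-for-greedy)}}(O^{\textrm{(bad)}}_{k_i})$. The intuition is that hard instances hide their optima from any efficient $\mathcal{A}$, yet greedy --- being a purely marginal-driven procedure --- behaves predictably on symmetric hard instances. Matching the greedy values and the optima blockwise yields property (ii) of the conclusion (the ratios match up to $1+\epsilon$), and per-block application of the hardness guarantee in Theorem~\ref{thm:basic_hardness} yields property (i) (no efficient $\mathcal{A}$ can beat greedy by more than $1+\epsilon$).

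Concretely, sort so $k_1<\dots<k_m$ and let $r_i:=g_i/o_i$ be the ratio we want to reproduce. I would place pairwise disjoint blocks $V_1,\dots,V_m$ in the ground set of $f$, install on each $V_i$ a hard instance $h_i$ from Theorem~\ref{thm:basic_hardness} with base cardinality $k^{(i)}$ and overall scale $v_i:=h_i(O^{h_i}_{k^{(i)}})$, and set $f(S):=\sum_i h_i(S\cap V_i)$, which is monotone submodular by direct-sum closure. The parameters $(k^{(i)},v_i)$ would be chosen inductively in $i$ so that (a) the $k_i$-optimum of $f$ equals $o_i+O(\epsilon)$ and (b) greedy on $f$ at budget $k_i$ achieves value $g_i+O(\epsilon)$. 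These two conditions are coupled across blocks, because greedy on a direct sum picks elements in order of marginal gain --- i.e.\ it performs a water-filling across blocks --- and $(k^{(i)},v_i)$ must be set so that the water-filling trajectory passes through $(k_i,g_i)$ for every $i$, with the total OPT matching $o_i$ at the same cardinalities.

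Once (a) and (b) hold, property (ii) is immediate: greedy's ratio on $f$ at $k_i$ is $(g_i+O(\epsilon))/(o_i+O(\epsilon))\le(1+\epsilon)r_i$. For property (i), I would apply Theorem~\ref{thm:basic_hardness} blockwise: any efficient algorithm $\mathcal{A}$ that allocates $l_j$ elements to block $V_j$ with $\sum_j l_j\le k_i$ obtains at most $\sum_j (1-e^{-l_j/k^{(j)}}+\epsilon)\,v_j$ on $f$. A concavity/KKT argument on $\ell\mapsto 1-e^{-\ell/k^{(j)}}$ would then show that the maximum of this sum, subject to $\sum_j l_j\le k_i$, is attained up to a factor $1+\epsilon$ by the very water-filling allocation that greedy itself performs, yielding $f(Y_{k_i})\le(1+\epsilon)f(X_{k_i})$.

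The main obstacle is the coupled parameter selection in (a)/(b): choosing $(k^{(i)},v_i)$ so that greedy's water-filling on $f$ replicates the sequence $\{g_i\}$ at all $m$ target budgets while the $k_i$-optima of $f$ simultaneously reproduce $\{o_i\}$. Solvability of the resulting inductive system rests on the concavity of the greedy trajectory $i\mapsto g_i$ on any monotone submodular function (its marginals are non-increasing, as used in Lemma~\ref{lem:greedy_one_step}), which is precisely the shape that can be matched by the concave trajectories $\ell\mapsto v_i(1-e^{-\ell/k^{(i)}})$ of the symmetric $h_i$'s. A routine secondary subtlety --- that the per-block hardness error in Theorem~\ref{thm:basic_hardness} compounds under a union bound over the $m$ blocks --- is handled by choosing the per-block error as a small function of $\epsilon/m$.
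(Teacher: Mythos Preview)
Your high-level architecture matches the paper's: $f$ is a disjoint sum of hard instances from Theorem~\ref{thm:basic_hardness}, property~(i) follows from a concave/KKT water-filling argument showing greedy allocates budget optimally across blocks, and property~(ii) follows from matching the relevant ratios of $f^{\textrm{(bad-for-greedy)}}$. The part~(i) argument you sketch is essentially what the paper does.

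The gap is in the parameter selection --- precisely the ``main obstacle'' you flag. You aim to choose $(k^{(i)},v_i)$ so that greedy's trajectory on $f$ passes through the \emph{actual} greedy values $g_i$ of $f^{\textrm{(bad-for-greedy)}}$ while the optima of $f$ hit the $o_i$'s. Your justification appeals to concavity of $i\mapsto g_i$, but concavity is not enough: greedy's trajectory on a direct sum of Theorem~\ref{thm:basic_hardness} instances is piecewise of the very specific form $\ell\mapsto A-Be^{-\ell/C}$, whereas greedy on an arbitrary monotone submodular $f^{\textrm{(bad-for-greedy)}}$ can produce any nonincreasing-increment sequence. In particular, if greedy happens to be near-optimal on $f^{\textrm{(bad-for-greedy)}}$ (say $g_i\approx o_i$), no direct sum of hard instances can reproduce $g_i$, since greedy on each block is capped at a $(1-1/e)$ fraction of that block's optimum.

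The paper sidesteps this by \emph{not} trying to match $g_i$. Instead it extracts from $f^{\textrm{(bad-for-greedy)}}$ only the \emph{greedy lower bound} $\fglb(q)$ obtained by iterating, at each step, the best of the $m$ inequalities $f(X_q)-f(X_{q-1})\ge (f(O_{k_l})-f(X_{q-1}))/k_l$; this lower bound is encoded by transition parameters $r_j$ (the points where the dominating guarantee switches). The construction then sets the block sizes to the \emph{differences} $\Delta_j=k_{l_j}-k_{l_{j-1}}$ and the scales $\alpha_j$ so that (a) $f(O_{k_{l_j}})=\sum_{s\le j}\alpha_s$ exactly, and (b) greedy on $f$ realizes $\fglb$ \emph{with equality}. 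Since $\fglb(k_i)\le g_i$ on $f^{\textrm{(bad-for-greedy)}}$ while the corresponding optimum ratios are determined by the same $r_j$'s, the ratio of greedy on $f$ is automatically at most the ratio on $f^{\textrm{(bad-for-greedy)}}$, giving~(ii). The point is that $\fglb$ is, by construction, \emph{exactly} piecewise of the shape that the disjoint-sum-of-hard-instances produces, so the matching is automatic rather than a nonlinear system to solve. If you replace ``match $g_i$'' by ``match the iterated-guarantee lower bound'' and take block sizes $k_{l_j}-k_{l_{j-1}}$, your outline becomes the paper's proof.
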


\begin{proof}[Proof of Theorem~\ref{thm:greedy_is_optimal}]
For arbitrarily small $\tau>0$, let $\rho_{\min}$ and $\rho_{\max}$ be such that the mass of $\mathcal{D}$ on $[\rho_{\min}, \rho_{\max}]$ is at least $1-\tau$.  We discretize $\{\rho_{\min}\cdot k,\,\rho_{\min}\cdot k+1,\,\dots,\rho_{\max}\cdot k\}$ into $\{\rho_{\min}\cdot k,\,(1+\delta)\rho_{\min}\cdot k,\,(1+\delta)^2\rho_{\min}\cdot k,\,\dots,\,\rho_{\max}\cdot k\}$. Without loss of generality, we assume that there exists $m$ such that $(1+\delta)^{m-1}\rho_{\min}=\rho_{\max}$ and every $k_i:=(1+\delta)^{i-1}\rho_{\min}\cdot k$ is integral. Let $f^{*}$ be the worst-case monotone submodular function for which greedy achieves only $\cR_{\textrm{greedy}}(\mathcal{D}(k))$ approximation in expectation.
By Theorem~\ref{thm:many_budgets}, for any efficient algorithm $\mathcal{A}$, there is a monotone submodular function $f$ such that for all $i\in[m]$, the solution $Y_{k_i}$ outputted by $\mathcal{A}$ for budget $k_i$ only achieves $f(Y_{k_i})\le (1+\epsilon)f(X_{k_i})$, where $X_{k_i}$ is the greedy solution for budget $k_i$, and moreover, for every budget $k_i$,
\begin{equation}\label{eq:no_worse_ratio}
    \frac{f(X_{k_i})}{f(O_{k_i})}\le (1+\epsilon)\frac{f^{\textrm{*}}(X^{*}_{k_i})}{f^{*}(O^{*}_{k_i})}
\end{equation}
where $O_{k_i}$ and $O^{*}_{k_i}$ denote optimal size-$k_i$ sets of $f$ and $f^{*}$ respectively, and $X^{*}_{k_i}$ denotes the size-$k_i$ greedy solution for $f^{*}$.

Besides, because marginal gain in each iteration of greedy is non-increasing, we have that $(1+\delta)f(X_{k_{i-1}})\ge f(X_{k_i})$. Furthermore, without loss of generality, we assume that $f(Y_{b})$ is non-decreasing in $b$, since otherwise, for budget $b$, we can let the algorithm choose the best solution among $Y_l$ for all $l\le b$ instead. For any $2\le i\le m$ and any budget $b$ such that $k_{i-1}\le b\le k_i$, it follows that
\begin{align*}
\frac{f(Y_{b})}{f(O_b)} &\le \frac{f(Y_{k_i})}{f(O_b)} && \text{(Since $b\le k_i$)}\\
&\le (1+\epsilon)\frac{f(X_{k_i})}{f(O_b)}&& \text{(By Theorem~\ref{thm:many_budgets})}\\
&\le (1+\epsilon)(1+\delta)\frac{f(X_{k_{i-1}})}{f(O_b)}&& \text{(Non-increasing marginal gain)}\\
&\le (1+\epsilon)(1+\delta)\frac{f(X_{k_{i-1}})}{f(O_{k_{i-1}})}&& \text{(Since $b\ge k_{i-1}$)}\\
&\le (1+\epsilon)^2(1+\delta)\frac{f^{*}(X^{*}_{k_{i-1}})}{f^{*}(O^{*}_{k_{i-1}})}&& \text{(By Eq.~\eqref{eq:no_worse_ratio})}\\
&\le (1+\epsilon)^2(1+\delta)\frac{f^{*}(X^{*}_{b})}{f^{*}(O^{*}_{k_{i-1}})}&& \text{(Since $X^{*}_{k_{i-1}}\subseteq X^{*}_{b}$)}\\
&\le (1+\epsilon)^2(1+\delta)^2\frac{f^{*}(X^{*}_{b})}{f^{*}(O^{*}_{b})}.&& \text{($f^{*}(O^{*}_{b})\le \frac{b}{k_{i-1}}f^{*}(O^{*}_{k_{i-1}})$ by submodularity)}
\end{align*} 

Therefore, for every budget $b$ in $\{\rho_{\min}\cdot k,\,\rho_{\min}\cdot k+1,\,\dots,\rho_{\max}\cdot k\}$, 
$\mathcal{A}$ can achieve on $f$ in expectation at most a factor of $(1+\epsilon)^2(1+\delta)^2$ times what greedy achieves on $f^{*}$.
The proof finishes since $\delta, \epsilon, \tau$ can be arbitrarily small.
\end{proof}

\subsection{Proof of Theorem~\ref{thm:many_budgets}}

In our proof we will not derive the analytic formula of the approximation ratio, but instead, the proof works in a black-box way---First, we introduce an array of parameters such that every instance can be characterized by these parameters, and we can show a parameterized guarantee of the marginal gain for each iteration of greedy. Then, we construct a hard instance characterized by the same parameters such that the best possible marginal gains for this instance always match the parametrized guarantees from greedy. It follows that the performance of greedy is optimal for every budget. 
Our hard instance has the following nice structure: it is a convex combination of disjoint-support copies of the classic hard instances guaranteed by Theorem~\ref{thm:basic_hardness}.

\begin{proof}[Proof of Theorem~\ref{thm:many_budgets}]
\subsubsection*{Proof setup: bounding a single step of greedy performance}
We first lower bound the single-step performance of greedy solutions. 
By Lemma~\ref{lem:greedy_one_step}, we have the following performance guarantees for each iteration of greedy,
\[
    \quad \forall\, l\in[m],\quad f(X_i)-f(X_{i-1})\ge \frac{1}{k_l}\cdot(f(O_{k_l})-f(X_{i-1}))\quad\textbf{($l$-th guarantee)},
\]
where $O_{k_l}$ denotes the optimal solution of cardinality $k_l$, and we call the inequality associated with $O_{k_l}$ the $l$-th guarantee. Given any $1\le l_1<l_2\le m$, if the $l_2$-th guarantee dominates (\ie, is at least as large as the $l_1$-th guarantee) at some iteration $i$, then the $l_2$-th guarantee will keep dominating the $l_1$-th guarantee for all the iterations $i'\ge i$, because the two guarantees are linear functions with variable $f(X_{i-1})$, and the $l_2$-th guarantee decreases slower than the $l_1$-th guarantee. Therefore, as $f(X_i)$ increases, the best guarantee can only transit from some $l$ to some $l'>l$. Given an instance, we let $t\le m-1$ be the number of times such transition occurs until $k_{m}$-th iteration and let $l_1<l_2<\dots<l_t$ be the indices of the corresponding best guarantees. 

For $j\le t-1$, let $F_j$ be the lowest possible value of $f(X_{i-1})$, for which the $j$-th transition occurs,
\begin{equation}\label{eq:def_F_j}
    \underbrace{\frac{1}{k_{l_j}}\cdot(f(O_{k_{l_j}})-F_j)}_{\text{($l_j$-th guarantee)}}=\underbrace{\frac{1}{k_{l_{j+1}}}\cdot(f(O_{k_{l_{j+1}}})-F_j)}_{\text{($l_{j+1}$-th guarantee)}}.
\end{equation}
We will be particularly interested in the quantity $r_j := F_j / f(O_{k_{l_j}})$.
Plugging into Eq.~\eqref{eq:def_F_j}, we have that 
\begin{equation}\label{eq:def_r_j}
    r_{j}=\frac{1-(k_{l_j}/k_{l_{j+1}})\cdot(f(O_{k_{l_{j+1}}})/f(O_{k_{l_j}}))}{1-(k_{l_j}/k_{l_{j+1}})}.
\end{equation}

\subsubsection*{Lower bounding the total value of the greedy solution recursively}

For $q\geq0$, we denote by $f^{(\textrm{greedy-lb})}(q)$ the best lower bound induced by the union of  ``$l$-th guarantees'' on the value of the $q$-th iterate of the greedy algorithm, namely
\[
\fglb(q) := \fglb(q-1)+ \max_l\left\{\frac{1}{k_l}\cdot(f(O_{k_l})-\fglb(q-1)\right\}.
\]
Now we analyze $\fglb(q)$ specifically for the instance with before-mentioned guarantee transitions. We start from the $l_1$-th guarantee and let $\fglb(0)=0$. 
Inductively, suppose that in the current iteration $q$, the $l_j$-th guarantee dominates the others, we apply the $l_j$-th guarantee $\fglb(q)-\fglb(q-1)=(f(O_{k_{l_j}})-\fglb(q-1))/k_{l_j}$ and continue iteratively until we reach some $i_j$-th iteration such that $\fglb(i_j-1)\le r_j\cdot f(O_{k_{l_j}})<\fglb(i_j)$. At the $i_j$-th iteration, $l_{j+1}$-th guarantee starts dominating, and thus, we switch to the $l_{j+1}$-th guarantee and continue like above.


\subsubsection*{Approximation ratio based on $\fglb(q)$'s is determined by $r_j$'s}
We claim that the parameters $r_j$ fully determine the ratio between the greedy lower bound $\fglb(k_{l_i})$ and $f(O_{k_{l_i}})$ for all $i\in[t]$. To see this, first observe that by~\ref{eq:def_r_j} we can infer $f(O_{k_{l_{j+1}}})/f(O_{k_{l_j}})$ from $r_j$. We can assume that $f(O_{k_{l_1}})$ is fixed without loss of generality, and then the parameters $r_j$ determine all the remaining $f(O_{k_{l_j}})$, \ie, for all $1<j\le t$,
\begin{equation}\label{eq:r_j_to_f_opt}
    f(O_{k_{l_j}})=f(O_{k_{l_1}})\cdot\prod_{j'\le j-1}\left(\frac{k_{l_{j'+1}}}{k_{l_{j'}}}-\left(\frac{k_{l_{j'+1}}}{k_{l_{j'}}}-1\right)r_{j'}\right).
\end{equation}
Moreover, the greedy lower bound $\fglb(k_{l_i})$ by definition is a linear combination of the $f(O_{k_{l_j}})$'s. Therefore, the ratio between any $\fglb(k_{l_i})$ and $f(O_{k_{l_i}})$ is fully characterized by $r_j$. In the other words, given an instance, we can get the approximation ratios of the greedy algorithm that depend only on its parameters $r_j$.

By definition of $r_1$, any feasible $r_1$ has to satisfy $r_1\le 1$, and by our assumption of the transitions, any feasible $r_j$ should satisfy $r_{j-1}\cdot f(O_{k_{l_{j-1}}})\le r_{j}\cdot f(O_{k_{l_j}})$ for all $1<j\le t$, which is equivalent to
\begin{align}\label{eq:constraint_r_j}
    r_{j-1}&\le
    r_{j}\cdot(f(O_{k_{l_j}})/f(O_{k_{l_{j-1}}})) \\
    &=
    r_{j}\cdot\left(\frac{k_{l_{j}}}{k_{l_{j-1}}}-\left(\frac{k_{l_{j}}}{k_{l_{j-1}}}-1\right)r_{j-1}\right) &&\text{(By Eq.~\eqref{eq:r_j_to_f_opt})} \nonumber\\
&    =r_{j}\cdot\left(1+\frac{k_{l_{j}}-k_{l_{j-1}}}{k_{l_{j-1}}}-\left(\frac{k_{l_{j}}-k_{l_{j-1}}}{k_{l_{j-1}}}\right)r_{j-1}\right) \nonumber\\  
& =r_{j}\cdot\left(1+\left(\frac{k_{l_{j}}-k_{l_{j-1}}}{k_{l_{j-1}}}\right)(1-r_{j-1})\right).
\end{align}
Next, for any feasible $r_j$'s (and in particular the $r_j$'s that correspond to the arbitrary instance $f^{\textrm{(bad-for-greedy)}}$ in the theorem statement), we construct a hard instance $f$ that is characterized by the same $r_j$'s (\ie, it satisfies Eq.~\eqref{eq:def_r_j} for the given $r_j$'s), such that for the hard instance $f$ and for every budget $k_i$, up to an arbitrarily small multiplicative error, (i) the aforementioned approximation ratio determined by the $r_j$'s is also an upper bound of the approximation ratio of greedy (which implies the second item in the theorem statement, because the approximation ratio determined by the particular $r_j$'s corresponding to $f^{\textrm{(bad-for-greedy)}}$ is a lower bound of the approximation ratio of greedy on $f^{\textrm{(bad-for-greedy)}}$), and (ii) greedy performs at least (almost) as good as the efficient $\mathcal{A}$ (which implies the first item in the theorem statement).


\subsubsection*{Construction of hard instance}
Let $\Delta_1=k_{l_{1}}$, $\Delta_j=k_{l_{j}}-k_{l_{j-1}},\,\forall\, 1<j<t$, and $\Delta_t=k_{m}-k_{l_{t-1}}$. We apply Theorem~\ref{thm:basic_hardness} to create $t$ hard (with respect to greedy and $\mathcal{A}$) functions $f_{\Delta_1},\dots,f_{\Delta_t}$ over disjoint ground sets $V_1,\dots,V_t$. We normalize these functions such that they have the same optimal value $1$ (\ie, $f_{\Delta_j}(O^{(j)})=1$, where $O^{(j)}$ denotes the optimal size-$\Delta_j$ solution for $f_{\Delta_j}$) and extend them to the ground set $V:=\cup_{i=1}^tV_i$.
 The final submodular function is 
 $$f(X):=\sum_{j=1}^t \alpha_j\cdot f_{\Delta_j}(X),$$ 
 where $\alpha_1:=1$ and 
 $$\alpha_j:=(\Delta_j/\sum_{s=1}^{j-1} \Delta_s)\cdot(\sum_{s=1}^{j-1}\alpha_s)\cdot(1-r_{j-1}), \;\;\;\text{for all}\; 1<j\le t.$$

\begin{claim}\label{claim:increasing_r_j}
For any $r_j$ that satisfy Eq.~\eqref{eq:constraint_r_j}, $r_j\cdot \sum_{s=1}^{j}\alpha_s$ is non-decreasing in $j$.
\end{claim}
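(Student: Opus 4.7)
The plan is to recognize that the partial sum $S_j := \sum_{s=1}^{j} \alpha_s$ satisfies a simple recurrence whose multiplicative factor is exactly the quantity appearing on the right-hand side of the constraint (\ref{eq:constraint_r_j}). Once this is observed, the claim follows by a one-line rearrangement.

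Concretely, first I would compute $S_j / S_{j-1}$ from the definition of $\alpha_j$. Since $\sum_{s=1}^{j-1}\Delta_s = k_{l_{j-1}}$ and $\Delta_j = k_{l_j} - k_{l_{j-1}}$ for $j > 1$, the definition gives
\[
\alpha_j \;=\; \frac{k_{l_j}-k_{l_{j-1}}}{k_{l_{j-1}}}\,S_{j-1}\,(1 - r_{j-1}),
\]
so that
\[
\frac{S_j}{S_{j-1}} \;=\; 1 + \frac{\alpha_j}{S_{j-1}} \;=\; 1 + \frac{k_{l_j}-k_{l_{j-1}}}{k_{l_{j-1}}}(1 - r_{j-1}).
\]

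Next, I would observe that the right-hand side of the constraint (\ref{eq:constraint_r_j}) is exactly $r_j \cdot (S_j/S_{j-1})$. Therefore (\ref{eq:constraint_r_j}) reads $r_{j-1} \le r_j \cdot S_j/S_{j-1}$, which after multiplying through by $S_{j-1} > 0$ yields $r_{j-1}\,S_{j-1} \le r_j\,S_j$, i.e., monotonicity of $r_j \sum_{s=1}^{j}\alpha_s$ in $j$.

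I do not expect any real obstacle here; the claim is essentially a tautological restatement of (\ref{eq:constraint_r_j}) once the $S_j$-recurrence is unpacked. The only mild care required is checking that $S_{j-1}$ is positive (which is immediate since $\alpha_1 = 1$ and the $\alpha_j$ are nonnegative whenever $r_{j-1} \le 1$, a property ensured inductively by the previous step of the same inequality together with $r_1 \le 1$).
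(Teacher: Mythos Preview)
Your proposal is correct and essentially identical to the paper's proof: both expand $S_j := \sum_{s\le j}\alpha_s$ via the definition of $\alpha_j$ to obtain $S_j/S_{j-1} = 1 + \tfrac{k_{l_j}-k_{l_{j-1}}}{k_{l_{j-1}}}(1-r_{j-1})$ and then invoke~\eqref{eq:constraint_r_j}. One small correction to your parenthetical: inequality~\eqref{eq:constraint_r_j} is a \emph{lower} bound on $r_j$ in terms of $r_{j-1}$, so it does not by itself inductively propagate $r_j\le 1$; in the paper's setting $r_j\in[0,1]$ holds directly from the defining formula~\eqref{eq:def_r_j} (using monotonicity and submodularity of the optimal values), and that is what actually guarantees $\alpha_j\ge 0$ and hence $S_{j-1}>0$.
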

\begin{proof}[Proof of Claim~\ref{claim:increasing_r_j}]
\begin{align*}
    r_j\cdot\sum_{s=1}^{j}\alpha_s&=r_j\cdot\left(\sum_{s=1}^{j-1}\alpha_s+\alpha_{j}\right)\\
&    =r_j\cdot\left(\sum_{s=1}^{j-1}\alpha_s+\frac{\Delta_{j}}{\sum_{s=1}^{j-1} \Delta_s}\cdot\left(\sum_{s=1}^{j-1}\alpha_s\right)\cdot(1-r_{j-1})\right) &&\text{(Definition of $\alpha_s$)}\\
    &=r_j\cdot\left(\sum_{s=1}^{j-1}\alpha_s+\frac{k_{l_{j}}-k_{l_{j-1}}}{k_{l_{j-1}}}\cdot\left(\sum_{s=1}^{j-1}\alpha_s\right)\cdot(1-r_{j-1})\right) &&\text{(Telescoping sum)}\\
    &=\left(1+\frac{k_{l_{j}}-k_{l_{j-1}}}{k_{l_{j-1}}}\cdot(1-r_{j-1})\right) r_j\cdot\sum_{s=1}^{j-1}\alpha_s\\
    &\ge r_{j-1}\cdot \sum_{s=1}^{j-1}\alpha_s, &&\text{(Ineq.~\eqref{eq:constraint_r_j})}
\end{align*}
\end{proof}

Because any feasible $r_j$'s that we need to consider satisfy $r_1\le 1$ and Eq.~\eqref{eq:constraint_r_j}, it follows by Claim~\ref{claim:increasing_r_j} and $\sum_{s=1}^{j-1}\alpha_s-(\sum_{s=1}^{j-1}\Delta_s/\Delta_j)\alpha_j=r_{j-1}\cdot \sum_{s=1}^{j-1}\alpha_s$ that $\alpha_j / \Delta_j$ is decreasing as $j$ increases. Hence, $f(O_{k_{l_{j}}})=\sum_{i=1}^j \alpha_i$ for all $j\in [t]$. Moreover, it is easy to verify that the $r_j$'s indeed characterize the $f$ constructed above in the sense that Eq.~\eqref{eq:def_r_j} holds for the $f$ constructed above.

\subsubsection*{Upper bounding greedy performance on the hard instance: a single step}
First, we analyze the best possible improvement of a single step of greedy on this instance. Suppose that greedy has chosen some size-$i$ set $X_i^{(j)}\subset V_j$, if it chooses another element from $V_j$, then we claim that the marginal gain is almost always $(f_{\Delta_j}(O^{(j)})-f_{\Delta_j}(X_i^{(j)}))/\Delta_j$ (it is at least this amount by greedy guarantee). Assume otherwise, for some $\gamma,\epsilon_1,\epsilon_2>0$, in the first $\gamma\cdot\Delta_j$ iterations when greedy chooses elements from $V_j$, there are more than $\epsilon_1\cdot \Delta_j$ iterations $i$ in which the marginal gain is larger than $((1+\epsilon_2)/\Delta_j)\cdot(f_{\Delta_j}(O^{(j)})-f_{\Delta_j}(X_i^{(j)}))$. Suppose that at $(\gamma\cdot\Delta_j)$-th iteration $f_{\Delta_j}(X_{\gamma\cdot\Delta_j}^{(j)})=c\cdot f_{\Delta_j}(O^{(j)})$, then each of those $\epsilon_1\cdot \Delta_j$ iterations gets at least an extra $(\epsilon_2/\Delta_j)\cdot(1-c)f_{\Delta_j}(O^{(j)})$ in addition to basic greedy guarantee, which implies that $f_{\Delta_j}(X_{\Delta_j}^{(j)})\ge (1-e^{-\gamma}+\epsilon_1\cdot\epsilon_2\cdot(1-c))f_{\Delta_j}(O^{(j)})$. Then, for some $\epsilon_3>0$, $f_{\Delta_j}(X_{\gamma\cdot\Delta_j}^{(j)})\ge(1-e^{-\gamma}+\epsilon_3)f_{\Delta_j}(O^{(j)})$, which is impossible by Theorem~\ref{thm:basic_hardness}. Henceforth, we can assume that the marginal gain for $f_{\Delta_j}$ is always $(f_{\Delta_j}(O^{(j)})-f_{\Delta_j}(X_i^{(j)}))/\Delta_j$ for the $i$-th iteration when greedy chooses elements from $V_j$, and this will only decrease all the values of interest by an arbitrarily small multiplicative error.

\subsubsection*{Upper bounding greedy performance on the hard instance: total value}
When we start running the greedy algorithm, for a while it only select elements from $V_1$ since those have the highest marginal contribution.
Specifically, suppose that at the beginning of the $q$-th step, greedy has selected $X_{q-1}\subset V_1$. Then the best achievable marginal gain of an element from $V_1$ for $f$ is $\alpha_1(f_{\Delta_1}(O^{(1)})-f_{\Delta_1}(X_{q-1}))/\Delta_1$. 
In comparison, the best singleton value of an element in $V_2$ is $(\alpha_2/\Delta_2)f_{\Delta_2}(O^{(2)})$, which is dominated by $\alpha_1(f_{\Delta_1}(O^{(1)})-f_{\Delta_1}(X_{q-1}))/\Delta_1$, when $f_{\Delta_1}(X_{q-1})\le r_1\cdot f_{\Delta_1}(O^{(1)})$, because
\begin{align}\label{eq:V1_dominate_V2}
    \frac{\alpha_1}{\Delta_1}(f_{\Delta_1}(O^{(1)})-f_{\Delta_1}(X_{q-1})) &\ge\frac{\alpha_1}{\Delta_1}(f_{\Delta_1}(O^{(1)})-r_1\cdot f_{\Delta_1}(O^{(1)})) && \text{(By $f_{\Delta_1}(X_{q-1})\le r_1\cdot f_{\Delta_1}(O^{(1)})$)} \nonumber \\
    &= \frac{\alpha_1}{\Delta_1}(1-r_1) && \text{(By $f_{\Delta_1}(O^{(1)})=1$)} \nonumber \\
    &= \frac{\alpha_2}{\Delta_2} && \text{(By definition of $\alpha_2$)} \nonumber \\
    &= \frac{\alpha_2}{\Delta_2}f_{\Delta_2}(O^{(2)}) && \text{(By $f_{\Delta_2}(O^{(2)})=1$)}.
\end{align}
Thus, when $f(X_{q-1})=\alpha_1\cdot f_{\Delta_1}(X_{q-1})<r_1\cdot f(O^{(1)})$, greedy should always prefer choosing elements from $V_1$ over $V_2$ (and other $V_i$'s), and the single step improvement is $f(X_q)-f(X_{q-1})=(f(O^{(1)})-f(X_{q-1}))/\Delta_1=(f(O^{(1)})-f(X_{q-1}))/k_{1}$ (this matches how $\fglb(q)$ changes).

We now analyze what happens when, after running greedy for a while,  the marginal contribution from $V_1$-elements decays so that greedy may prefer $V_2$-elements. 
By Eq.~\eqref{eq:V1_dominate_V2}, it is when $f_{\Delta_1}(X_{q-1})= r_1\cdot f_{\Delta_1}(O^{(1)})$ that the best singleton value of $V_2$-elements $(\alpha_2/\Delta_2)f_{\Delta_2}(O^{(2)})$ becomes equal to the best marginal contribution of a $V_1$-element $\alpha_1(f_{\Delta_1}(O^{(1)})-f_{\Delta_1}(X_{q-1}))/\Delta_1$.
Therefore, once $r_2\cdot f(O^{(1)}\cup O^{(2)})> f(X_{q-1})\ge r_1\cdot f(O^{(1)})$, greedy should start choosing elements from $V_1$ and $V_2$ alternatively to keep the identity $\alpha_1(f_{\Delta_1}(O^{(1)})-f_{\Delta_1}(X_{q-1}))/\Delta_1=\alpha_2(f_{\Delta_2}(O^{(2)})-f_{\Delta_2}(X_{q-1}))/\Delta_2$ (up to negligible error), it follows that
\begin{align*}
    \frac{\alpha_1(f_{\Delta_1}(O^{(1)})-f_{\Delta_1}(X_{q-1}))}{\Delta_1}&=\frac{\alpha_2(f_{\Delta_2}(O^{(2)})-f_{\Delta_2}(X_{q-1}))}{\Delta_2} \\
    &=\frac{\alpha_1(f_{\Delta_1}(O^{(1)})-f_{\Delta_1}(X_{q-1}))+\alpha_2(f_{\Delta_2}(O^{(2)})-f_{\Delta_2}(X_{q-1}))}{\Delta_1+\Delta_2} \\
    &=\frac{f(O^{(1)}\cup O^{(2)})-f(X_{q-1})}{k_{l_2}}.
\end{align*}
Thus, there is a transition of the best marginal gain when $f(X_{q-1})=r_1\cdot f(O^{(1)})$ with $X_{q-1}\subseteq V_1$, and after that the best achievable marginal gain is characterized by $(f(O^{(1)}\cup O^{(2)})-f(X_{q-1}))/k_{l_2}$ (this matches the guarantee transition for $\fglb(q)$), which is larger than $(\alpha_3/\Delta_{3})f_{\Delta_3}(O^{(3)})$ by definition of $\alpha_3$.

Similarly, for every $3\le p\le t$, when $f(X_{q-1})=r_{p-1}\cdot f(\cup_{j=1}^{p-1}O^{(j)})$ with $X_{q-1}\subseteq\cup_{j\le p-1} V_j$, it holds that $(f(\cup_{j=1}^{p-1}O^{(j)})-f(X_{q-1}))/k_{l_{p-1}}=(\alpha_p/\Delta_p)f_{\Delta_p}(O^{(p)})$ by definition of $\alpha_p$, and hence, greedy starts to choose elements from $V_1,\dots,V_p$ to keep $\alpha_j(f_{\Delta_j}(O^{(j)})-f_{\Delta_j}(X_{q-1}))/\Delta_j$ for all $j\le p$ approximately equal to each other. Hence, for all $j\le p$,
\[
    \frac{\alpha_j(f_{\Delta_j}(O^{(j)})-f_{\Delta_j}(X_{q-1}))}{\Delta_j}=\frac{\sum_{j\le p}\alpha_j(f_{\Delta_j}(O^{(j)})-f_{\Delta_j}(X_{q-1}))}{\sum_{j\le p}\Delta_j}=\frac{f(\cup_{j=1}^{p}O^{(j)})-f(X_{q-1})}{k_{l_{p}}},
\]
and this is a transition of the best marginal gain from $(f(\cup_{j=1}^{p-1}O^{(j)})-f(X_{q-1}))/k_{l_{p-1}}$ to $(f(\cup_{j=1}^{p}O^{(j)})-f(X_{q-1}))/k_{l_{p}}$ (this matches the guarantee transition for $\fglb(q)$). Therefore, we have shown that the greedy performance $f(X_q)$ changes in exactly the same way as $\fglb(q)$, and hence, the approximation ratio based on $\fglb(q)$'s is tight for greedy on the hard instance.

\subsubsection*{How greedy spends the budget} Finally, following the above derivation, we emphasize how greedy spends the budget. As we have shown, for any $1\le p\le t$, when $r_{p-1}\cdot f(\cup_{j=1}^{p}O^{(j)})\le f(X_{q-1})\le r_{p}\cdot f(\cup_{j=1}^{p}O^{(j)})$, greedy splits its budget on $V_1,\dots,V_p$ to keep all the $\alpha_j(f_{\Delta_j}(O^{(j)})-f_{\Delta_j}(X_{q-1}))/\Delta_j$ approximately equal to each other. Moreover, for any $j'>p$, the best singleton value $(\alpha_{j'}/\Delta_{j'})f_{\Delta_{j'}}(O^{(j')})$ of $V_{j'}$ is smaller than $\alpha_j(f_{\Delta_j}(O^{(j)})-f_{\Delta_j}(X_{q-1}))/\Delta_j$ for any $j\le p$. Suppose that greedy has spent budget $\hat{b}_j$ on $V_j$ for each $j$, which implies that $f_{\Delta_j}(X_{q-1})=1-e^{-\hat{b}_j/\Delta_j}$. Then, we have that $\alpha_j(f_{\Delta_j}(O^{(j)})-f_{\Delta_j}(X_{q-1}))/\Delta_j=\frac{d\alpha_j(1-e^{-x/\Delta_j})}{dx}|_{x=\hat{b}_j}$, and thus, $\frac{d\alpha_j(1-e^{-x/\Delta_j})}{dx}|_{x=\hat{b}_j}$ for all $j\le p$ are equal to each other. Moreover, since $(\alpha_{j'}/\Delta_{j'})f_{\Delta_{j'}}(O^{(j')})=\frac{d\alpha_{j'}(1-e^{-x/\Delta_{j'}})}{dx}|_{x=0}$, $\frac{d\alpha_j(1-e^{-x/\Delta_j})}{dx}|_{x=\hat{b}_j}\ge\frac{d\alpha_{j'}(1-e^{-x/\Delta_{j'}})}{dx}|_{x=0}$ for all $j\le p$ and $j'>p$.

\subsubsection*{Greedy spends the budget optimally on the hard instance}
By Theorem~\ref{thm:basic_hardness} and the design of our hard instances, for any budget $b_j$, the best possible value the efficient algorithm $\mathcal{A}$ can get by spending budget $b_j$ on $f_{\Delta_j}$ is $u_j(b)=(1-e^{-b_j/\Delta_j})\alpha_j$. Suppose $\mathcal{A}$ spends budget $b_j$ on each $f_{\Delta_j}$, where $\sum_{j=1}^{t}b_j=b$ for some $b$, then in this case, the best possible value in total is $\sum_{j=1}^{t}u_j(b_j)$, and hence, in general, the best possible value for budget $b$ is upper bounded by the maximum of the following program:
\[  
    \max \sum_{j=1}^{t}u_j(b_j)\quad\textrm{s.t. }\sum_{j=1}^{t}b_j=b \;\; \textrm{and} \; \forall j \; b_j\ge 0.
\]
We observe that for an arbitrary fixed $b$, the maximizer $b_j^*$'s for this program should satisfy that for all positive $b_j^*$, the derivatives of $u_j$'s at $b_j^*$'s are equal (notice that the way greedy spends the budget also satisfies this property), and moreover, they are not smaller than the derivative of $u_{j'}$'s at $0$ for any $j'$ such that $b_{j'}^*=0$. Otherwise, there must exist $\frac{du_{j_1}(x)}{dx}|_{x=b_{j_1}^*}<\frac{du_{j_2}(x)}{dx}|_{x=b_{j_2}^*}$ where $b_{j_1}^*$ is strictly positive, then increasing $b_{j_2}^*$ by $\delta$ and decreasing $b_{j_1}^*$ by $\delta$ for sufficiently small $\delta$ will increase the objective value while preserving the feasibility of $b_j^*$'s.

Now we prove that for any fixed $b$, the $b_j^*$'s satisfying the above mentioned property are unique. (Then, it follows that the maximizer matches exactly how greedy spends the budget, and moreover, greedy attains the optimal value of the program.) Suppose that besides $b_j^*$'s, $\widetilde{b}_j$'s also satisfy the property. Let $\textrm{supp}(b^*)$ be the set of $j$ such that $b_j^*>0$. We first argue if $j'\notin \textrm{supp}(b^*)$, then $\widetilde{b}_{j'}=0$. Suppose otherwise, $\widetilde{b}_{j'}>0$, then $\sum_{j\in\textrm{supp}(b^*)}\widetilde{b}_j<t$, and hence, there must exist a $j\in\textrm{supp}(b^*)$ such that $\widetilde{b}_j<b_j^*$. By strict concavity of $u_j$, $\frac{du_{j}(x)}{dx}|_{x=\widetilde{b}_{j'}}>\frac{du_{j}(x)}{dx}|_{x=b_{j}^*}$. However, since the $b_j^*$'s satisfy the above mentioned property and $b_{j'}^*=0$, $\frac{du_{j}(x)}{dx}|_{x=b_{j}^*}\ge \frac{du_{j'}(x)}{dx}|_{x=0}$, and by strict concavity of $u_{j'}$, $\frac{du_{j'}(x)}{dx}|_{x=0}> \frac{du_{j'}(x)}{dx}|_{x=\widetilde{b}_{j'}}$, which gives a contradiction. Furthermore, we can argue that for all $j\in\textrm{supp}(b^*)$, $\widetilde{b}_{j'}=b^*_{j}$, because otherwise, there must exist $j,j'\in\textrm{supp}(b^*)$ such that $b^*_{j}>\widetilde{b}_{j}$ and $b^*_{j'}<\widetilde{b}_{j'}$, and hence $\frac{du_{j}(x)}{dx}|_{x=b^*_{j}}<\frac{du_{j}(x)}{dx}|_{x=\widetilde{b}_{j}}$ and $\frac{du_{j'}(x)}{dx}|_{x=b^*_{j'}}> \frac{du_{j'}(x)}{dx}|_{x=\widetilde{b}_{j'}}$, which contradicts the property $\frac{du_{j}(x)}{dx}|_{x=\widetilde{b}_{j}}=\frac{du_{j'}(x)}{dx}|_{x=\widetilde{b}_{j'}}$.


\end{proof}

\subsection{Two Remarks for Theorem~\ref{thm:many_budgets}}
One might wonder whether the transitions of the greedy guarantees in the above analysis of Theorem~\ref{thm:many_budgets} always occur in the order $1,2,\dots,m$ but never in any proper subsequence, namely whether $r_l\cdot f(O_{k_{l}})\le r_{l-1}\cdot f(O_{k_{l-1}})$, which is equivalent to
\[
    \frac{f(O_{k_{l}})-(k_{l}/k_{l+1})f(O_{k_{l+1}})}{1-(k_{l}/k_{l+1})}\ge \frac{f(O_{k_{l-1}})-(k_{l-1}/k_{l})f(O_{k_{l}})}{1-(k_{l-1}/k_{l})},
\]
This is equivalent to
\[
    f(O_{k_{l+1}})-f(O_{k_{l}})\le\frac{k_{l+1}-k_l}{k_l-k_{l-1}}\cdot(f(O_{k_{l}})-f(O_{k_{l-1}})),
\]
which is actually true for our instances but not in general. See Example~\ref{ex:increasing_opt_diff}.

\begin{example}\label{ex:increasing_opt_diff}
Consider the function $f$ on the ground set $\{1,2,3,4\}$ with values $f(\emptyset)=0,\,f(\{1\})=1,\,f(\{2\})=f(\{3\})=f(\{4\})=1/2$, $f(\{1,2\})=f(\{1,3\})=f(\{1,4\})=7/6$, $f(\{2,3\})=f(\{2,4\})=f(\{3,4\})=1$, $f(\{2,3,4\})=3/2$, $f(\{1,2,3\})=f(\{1,2,4\})=f(\{1,3,4\})=4/3$ and $f(\{1,2,3,4\})=3/2$. It is straightforward to check that $f$ is submodular and monotone, and that $f(O_3)-f(O_2)>f(O_2)-f(O_1)$.
\end{example}

Finally, we end this section with the following observation. In the appendix, we give the proof of this observation and show that many practical algorithms satisfy the condition of this observation.
\begin{observation}\label{obs:opt_algorithms}
For any perturbation factors $0<\rho_1<\rho_2<\dots<\rho_{m}$, there exists a sufficiently large $k$ that grows with the size of instance such that given $m$ budgets $k_1=\rho_1\cdot k$, \dots, $k_{m}=\rho_{m}\cdot k$, the optimality described in Theorem~\ref{thm:many_budgets} actually holds for a general class of algorithms such that:
\begin{itemize}
    \item Given budget $k_i$, the algorithm $\mathcal{A}$ runs in $T$ rounds ($T$ is sufficiently large), of which each round selects about $k_i/T$ elements.
    \item For any $\epsilon>0$, it holds for all $t\in[T]$, for all $j\in[m]$, that $f(X^{\mathcal{A}}_{tk_i/T})-f(X^{\mathcal{A}}_{(t-1)k_i/T})\ge ((1-\epsilon)\rho_i/(\rho_j T))\cdot(f(O_{k_j})-f(X^{\mathcal{A}}_{tk_i/T}))$, where $X^{\mathcal{A}}_s$ is the $s$-th element chosen by $\mathcal{A}$.
\end{itemize}
\end{observation}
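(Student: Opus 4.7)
The plan is to mimic the proof of Theorem~\ref{thm:many_budgets} with per-round guarantees replacing per-step guarantees. Concretely, after dividing the per-round inequality in the observation statement by $k_i/T$ and noting that $(k_i/T)/k_j = \rho_i/(\rho_j T)$, the guarantee becomes, up to a $(1-\epsilon)$ slack, exactly the ``$k_i/T$-elements-at-a-time'' version of Lemma~\ref{lem:greedy_one_step}. Thus the same family of $m$ linear lower bounds on the per-round marginal, indexed by $j\in[m]$, applies to $\mathcal{A}$, and the same analysis of which guarantee dominates---together with the same transition sequence $l_1<\cdots<l_t$ and parameters $r_j$---carries over verbatim. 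Defining the analog of $\fglb$ over the $T$ rounds of $\mathcal{A}$ and choosing $T$ sufficiently large that the continuous-time limit is attained, I conclude that for every budget $k_i$, the value of $\mathcal{A}$ matches (up to a $(1\pm\epsilon)$ factor) the same ``greedy lower bound'' predicted by the $r_j$'s.

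Next, I would reuse the hard instance $f=\sum_j \alpha_j f_{\Delta_j}$ from the proof of Theorem~\ref{thm:many_budgets} without change. The tightness analysis there already establishes two facts that are algorithm-agnostic (i.e.\ not specific to greedy): (a) by Theorem~\ref{thm:basic_hardness}, any efficient algorithm achieves at most $(1-e^{-b_j/\Delta_j})\alpha_j$ on the component $f_{\Delta_j}$ using budget $b_j$, and (b) the constrained maximization of $\sum_j (1-e^{-b_j/\Delta_j})\alpha_j$ subject to $\sum_j b_j = b$ has a unique maximizer characterized by the equalized-derivative property already derived at the end of the proof of Theorem~\ref{thm:many_budgets}. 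Combined with the per-round guarantee, which as above forces $\mathcal{A}$ to realize (up to $(1\pm\epsilon)$) this unique optimal budget allocation across its rounds, we conclude that $\mathcal{A}$ attains essentially the same value as greedy on the hard instance. Part (ii) of Theorem~\ref{thm:many_budgets} then carries over with no changes, since it only concerns greedy's behavior on $f$ versus $f^{\textrm{(bad-for-greedy)}}$.

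The main obstacle I anticipate is handling the fact that $\mathcal{A}$ is allowed to commit to $k_i/T$ elements at once rather than one-by-one, which in principle lets $\mathcal{A}$ collect more value within a single round than a per-step greedy would, and so the per-round guarantee need not be tight a priori. The resolution is the same concentration-style argument used in the ``upper bounding greedy performance'' portion of Theorem~\ref{thm:many_budgets}: if $\mathcal{A}$ systematically extracted more than the $(1-e^{-l/\Delta_j})\alpha_j$ upper bound from component $f_{\Delta_j}$ on a non-vanishing fraction of its rounds restricted to $V_j$, this would contradict Theorem~\ref{thm:basic_hardness} applied to the restriction of $\mathcal{A}$ to $V_j$. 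Taking $T$ large (so each round's contribution is a vanishing fraction of the total) and $\epsilon$ small then removes all slack and yields the desired asymptotic optimality.
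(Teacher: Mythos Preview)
Your overall plan is sound, but you have misidentified where the work lies. The paper's proof is entirely devoted to one subtlety you gloss over: the per-round guarantee in the observation uses $f(X^{\mathcal{A}}_{tk_i/T})$, the value at the \emph{end} of round $t$, on the right-hand side, whereas Lemma~\ref{lem:greedy_one_step} uses the value \emph{before} the step. So after you divide by $k_i/T$, the inequality is \emph{not} ``exactly the $k_i/T$-elements-at-a-time version of Lemma~\ref{lem:greedy_one_step}''; it is strictly weaker, and a priori could yield a strictly smaller lower bound than $\fglb$. The paper's entire proof is a quantitative argument that this gap is negligible: it compares $\hat f(X^{\mathcal{A}}_{(t-1)k_i/T+s})$ with $\hat f(X^{\textrm{greedy}}_{(t-1)k_i/T+s})$, bounds the per-round discrepancy by $\epsilon_t:=(\hat f(X^{\mathcal{A}}_{tk_i/T})-\hat f(X^{\mathcal{A}}_{(t-1)k_i/T}))/k_1$, and shows by induction and a telescoping sum that the cumulative error is at most $(\rho_i/(\rho_1 T))\cdot f(X^{\mathcal{A}}_{k_i})$, which vanishes for large $T$. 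Your ``continuous-time limit'' remark is the correct intuition for this, but you treat it as a throwaway line rather than the main content.

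Conversely, everything you flag as the ``main obstacle''---the hard instance, the concave budget-allocation program, and the tightness argument via Theorem~\ref{thm:basic_hardness}---is already proven in Theorem~\ref{thm:many_budgets} and is, as you yourself note, algorithm-agnostic. Part (i) of Theorem~\ref{thm:many_budgets} upper-bounds \emph{any} efficient algorithm on $f$, so it applies to $\mathcal{A}$ with no additional work; your worry that $\mathcal{A}$ might ``systematically extract more'' is in the wrong direction. The only thing left to show is that $\mathcal{A}$'s lower bound matches greedy's, and that is precisely the before/after-value comparison above.
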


\section{A Lower Bound for Every Distribution}\label{sec:bound}

The main thesis of this paper is that worst case instances of submodular maximization are really tailored to a specific budget constraint. 
It is natural to hope that as the distribution of budget perturbation becomes arbitrarily spread (aka arbitrarily far from the worst case single budget), the approximation factor approaches $1$. 
%
In this section, we give a negative answer to this question. 

\begin{theorem}\label{thm:lower_bound}
For any distribution of budget perturbations $\mathcal{D}$, for any efficient algorithm $\mathcal{A}$, for every sufficiently large $k$ that grows with the size of instance, $\cR_{\mathcal{A}}(\mathcal{D}(k))\le 0.9087$.
\end{theorem}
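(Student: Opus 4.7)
By Theorem~\ref{thm:greedy_is_optimal}, up to an arbitrarily small multiplicative loss it suffices to produce, for every distribution $\mathcal{D}$, a monotone submodular function $f$ with
$$\E_{\rho\sim\mathcal{D}}\!\left[\frac{f(X_{\rho k})}{f(O_{\rho k})}\right]\le 0.9087,$$
where $X_l, O_l$ are the greedy solution and the offline optimum of size $l$. I will look for such an $f$ inside the family $f=\sum_{j=1}^{t}\alpha_j f_{\Delta_j}$ on disjoint ground sets that already drives the hardness analysis in the proof of Theorem~\ref{thm:many_budgets}, because on these instances both sides of the ratio are fully explicit and we can leverage the characterization proved there.

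Recall from the proof of Theorem~\ref{thm:many_budgets} that on such an instance greedy distributes budget $b_j^{*}(b)$ across the active blocks so as to equalize the residual marginal gains $\alpha_j e^{-b_j^{*}/\Delta_j}/\Delta_j$, with total value $\sum_j\alpha_j(1-e^{-b_j^{*}/\Delta_j})$, while the offline optimum fills blocks in decreasing order of $\alpha_j/\Delta_j$ and is piecewise-linear concave in $b$. My construction is then to (i) discretize $\mathcal{D}$ onto a multiplicative grid (using the same discretization as in the proof of Theorem~\ref{thm:greedy_is_optimal}); (ii) place the scales $\Delta_j$ on this grid so that they cover the support of $\mathcal{D}$; and (iii) tune the weights $\alpha_j$ so that between consecutive scales the ratio $\alpha_j/\Delta_j$ decays just fast enough that OPT concentrates on a single block while greedy is forced to spread across many, harvesting only a $(1-1/e)$-type fraction from each. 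The expected ratio then becomes an explicit function of $\mathcal{D}$ and $(\vec\alpha,\vec\Delta)$, amenable to direct analysis.

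The main technical content, and the main obstacle, is obtaining the specific constant $0.9087$ uniformly over all $\mathcal{D}$. Because the joint optimization over $(\vec\alpha,\vec\Delta)$ is non-convex (as the authors themselves emphasize in the discussion of Result~2), I would not attempt to solve it in closed form. Instead, the path of least resistance is to fix a small family of candidate hard instances -- a two- or three-scale construction in the spirit of Proposition~\ref{prop:two_budget} -- and argue that for every $\mathcal{D}$ at least one member of this family forces greedy to lose a definite factor on a constant fraction of the mass of $\rho$. The bound $0.9087$ should then pop out as the worst case of that finite family, and the gap between $0.9087$ and the unknown optimum quoted by the authors corresponds precisely to the slack introduced by restricting to this fixed family of instances rather than solving the full non-convex program characterizing the true optimum.
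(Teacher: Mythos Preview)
Your setup is right: reduce to greedy via Theorem~\ref{thm:greedy_is_optimal}, and work inside the block family $f=\sum_j \alpha_j f_{\Delta_j}$. Steps (i)--(iii) in your plan even point in the correct direction. The gap is in what you propose to actually carry out: a ``two- or three-scale construction in the spirit of Proposition~\ref{prop:two_budget}'' cannot give a bound that is uniform over all $\mathcal{D}$, no matter how cleverly you place the two or three blocks.

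Here is the obstruction. With a bounded number of blocks, the budgets for which the greedy/OPT ratio is bounded away from $1$ occupy only a bounded window on the log-scale: once the budget $b$ is either much smaller than the smallest $\Delta_j$ or much larger than $\sum_j\Delta_j$, both greedy and OPT essentially coincide (ratio $\to 1$). So if $\mathcal{D}$ is, say, log-uniform over $[x,Mx]$ with $M$ huge, any two- or three-block instance captures only an $O(1/\log M)$ fraction of the mass in its ``hard'' window, and the expected ratio is forced toward $1$. Your claim that ``for every $\mathcal{D}$ at least one member of this family forces greedy to lose a definite factor on a constant fraction of the mass of $\rho$'' therefore fails precisely for widely-spread distributions---which is the regime the theorem is meant to address.

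The paper's key idea, which you are missing, is a \emph{self-similar} construction with as many blocks as needed to cover the entire support of $\mathcal{D}$: geometric scales $K_i=q^{\,i-1}K_1$ (with $q=50$) and geometric weights $(q/e)^{\,i-1}$. The choice of weight ratio $q/e$ is exactly what makes greedy, at budget $\sim K_i$, finish block $i$ just as block $i{+}1$ becomes attractive, so that at \emph{every} scale the picture looks the same. This yields a \emph{pointwise} bound: for every single budget $K$ in the support, the greedy ratio is at most $0.8512$ on the two active blocks plus a geometric tail of at most $0.0575$ from the smaller blocks, giving $0.9087$. Because the bound is per-budget, it holds for every $\mathcal{D}$ without any case analysis over distributions---there is no need to ``hit a constant fraction of the mass.'' Your plan (i)--(iii), taken literally (cover the whole support with many blocks and pick a fixed geometric decay of $\alpha_j/\Delta_j$), would have gotten you there; abandoning it for a finite-scale family is the wrong turn.
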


(We did not seriously try to optimize the constant $0.9087$. Computing the optimal constant is an interesting open problem for future work.)

\begin{proof}
For arbitrarily small $\tau>0$, let $\rho_{\min}$ and $\rho_{\max}$ be such that the mass of $\mathcal{D}$ on $[\rho_{\min}, \rho_{\max}]$ is at least $1-\tau$.
Let $q=50$. Let $K_1=q^{-(i^*-1)}\cdot k$ where $i^*$ is the largest $i$ such that $q^{-(i-1)}\cdot k\le \rho_{\min}\cdot k$. Let $N$ be the smallest $i$ such that $q^{(i-1)}\cdot K_1\ge \rho_{\max}\cdot k$. We first construct the hard instances, and by Theorem~\ref{thm:many_budgets} it suffices to upper bound the approximation ratio achieved by greedy on these instances.

\subsubsection*{Construction of hard instances}
Let $K_{i}=q^{(i-1)}\cdot K_1$. We use Theorem~\ref{thm:basic_hardness} to create hard (with respect to greedy algorithm) functions $f_{K_{i}}$ for all $i\in [N]$ over disjoint ground sets. We normalize these functions such that they have the same optimal value $1$ (\ie, $f_{K_i}(O^{(i)})=1$, where $O^{(i)}$ denotes the optimal size-$K_i$ solution for $f_{K_i}$) and extend them to the union of all the ground sets. The final submodular function is $f(X)=\sum_{i=1}^N (q/e)^{i-1}\cdot f_{K_i}(X)$.

\subsubsection*{Upper bounding the approximation ratio on the hard instances}
Consider a budget $K$ between $\sum_{j=1}^i K_{j}$ and $\sum_{j=1}^{i+1} K_{j}$, for any $i\le N-1$. We first show that the contribution of $f_{K_j}$ with $j\le i-1$ is negligible. Notice that the best singleton value of $f_{K_j}$ is $(q/e)^{j-1}/(q^{j-1}\cdot K_1)$, which is decreasing in $j$. Hence, we can generously assume that the algorithm spends a budget of size $\sum_{j=1}^{i-1} K_{j}$ getting all the utilities from $f_{K_j}$ with $j\le i-1$, which is the best one can hope for. The total value of these $f_{K_j}$'s is $\sum_{j=1}^{i-1}(q/e)^{j-1}=((q/e)^{i-1}-1)/(q/e-1)$, which is less than $1/(q/e-1)<0.0575$ fraction of the value of the $f_{K_i}$. Therefore, the best possible approximation ratio for budget $K$ is at most the best possible approximation ratio for budget $K-\sum_{j=1}^{i-1} K_{j}$ on the $f_{K_j}$'s with $j\ge i$ plus $0.0575$.

Furthermore, the best singleton value of $f_{K_{i+1}}$ is at most $(q/e)^{i}/(q^{i}\cdot K_1)$. On the other hand, with budget $K_i$ on $f_{K_i}$, greedy can achieve approximation ratio at most $1-1/e$ by Theorem~\ref{thm:basic_hardness}, and thus, at the $(K_i+1)$-th iteration, greedy has marginal gain at least $(1-(1-1/e))\cdot(q/e)^{i-1}/(q^{(i-1)}\cdot K_1)$, which is equal to the best singleton value of $f_{K_{i+1}}$. Hence, greedy will not choose anything from $f_{K_{i+1}}$ until it has selected $K_i$ elements from $f_{K_i}$. The remaining budget $K-\sum_{j=1}^{i} K_{j}$ is at most $K_{i+1}$, and it follows for the same reason that greedy will not spend remaining budget on any $f_{K_j}$'s with $j\ge i+2$.

It remains to show how greedy performs on $f_{K_i}$ and $f_{K_{i+1}}$ with budget $K'=K-\sum_{j=1}^{i} K_{j}$. Let $a=K'/K_i$. Notice that greedy splits its budget in the way that the marginal gain of choosing the next element from $f_{K_i}$ is approximately equal to that of choosing the next element from $f_{K_{i+1}}$. This can be expressed as the following equations:
\begin{align*}
    a_1K_i+a_2K_i&=aK_i, \\
    e^{-a_1}\cdot \frac{e/q}{K_i}&=e^{-\frac{a_2K_i}{K_{i+1}}}\cdot \frac{1}{K_{i+1}},
\end{align*}
where $a_1K_i$ and $a_2K_i$ are the budgets spent on $f_{K_i}$ and $f_{K_{i+1}}$ respectively. The solution is $a_1=(a+q)/(q+1)$ and $a2=(q(a-1))/(q+1)$. Hence the approximation ratio of greedy on $f_{K_i}$ and $f_{K_{i+1}}$ with budget $K'$ is at most
\[
    \frac{(1-e^{-a_1K_i/K_i})\cdot\frac{e}{q}+(1-e^{-a_2K_i/K_{i+1}})}{\frac{e}{q}+\frac{aK_i-K_i}{K_{i+1}}}=\frac{(1-e^{-(a+q)/(q+1)})\cdot\frac{e}{q}+(1-e^{-(a-1)/(q+1)})}{\frac{e}{q}+\frac{a-1}{q}}.
\]
The maximum is approximately $0.85114$ achieved by $a\approx 9.2199$. Hence, in total, the approximation ratio for the entire instance is less than $0.8512+0.0575=0.9087$. We have proved that the optimal approximation ratio for the hard instance is less than $0.9087$ for any budget in $[\rho_{\min}\cdot k, \rho_{\max}\cdot k]$, and this finishes the proof because $\tau$ is arbitrarily small.

\end{proof}


\section{Numerical Simulation}\label{sec:simulation}
We formulate a mathematical program that computes the worst possible optimal expected approximation ratio, for any fixed distribution on any fixed choice of $m$ budgets $\rho_1 k<\rho_2 k<\dots<\rho_m k=k$ (we also let $\rho_0=0$). We denote the probability of budget $\rho_i k$ by $p_i$ for each $i$.


\subsubsection*{Reducing the hard instances to a standard form}
Recall that the hard instances in the proof of Theorem~\ref{thm:many_budgets} have following form---$f^*(X)=\sum_{j=1}^t\alpha_j\cdot f_{(\rho_{l_j}-\rho_{l_{j-1}})k}$, where $l_1,\dots,l_t$ is a subsequence of $1,\dots,m$, and $f_{(\rho_{l_j}-\rho_{l_{j-1}})k}(X)$ is the hard submodular function from Theorem~\ref{thm:basic_hardness}, and it is normalized such that its optimal value for budget $(\rho_{l_j}-\rho_{l_{j-1}})k$ is $1$. Moreover, $\alpha_j/(\rho_{l_j}-\rho_{l_{j-1}})$ is increasing in $j$.
We show that there is a submodular function that is as hard as $f^*$ to approximately maximize in the following \textbf{standard form}\footnote{The difference between $f^*$ and the standard form $f$ is that in the $f$ there is a sub-instance for every budget}---$f(X)=\sum_{i=1}^m\beta_i\cdot f_{(\rho_i-\rho_{i-1})k}(X)$, where $\beta_i$'s satisfy
\begin{equation}\label{eq:program_constraint}
    \frac{\beta_i}{\rho_i-\rho_{i-1}}\ge\frac{\beta_{i+1}}{\rho_{i+1}-\rho_i},\quad\forall\, i<m,
\end{equation}
where $f_{(\rho_i-\rho_{i-1})k}$ is defined analogously to $f_{(\rho_{l_j}-\rho_{l_{j-1}})k}$, and we denote the ground set of $f_{(\rho_i-\rho_{i-1})k}$ by $V_i$. 

For $i \in [m]$ and $j$ such that $l_{j-1}< i\le l_j$, we define $$\lambda_i:=\frac{\rho_i-\rho_{i-1}}{\rho_{l_j}-\rho_{l_{j-1}}}.$$ 
\begin{claim}
Given budget $x\cdot k$ for any $x\ge 0$, the best achievable approximation ratio for $\sum_{i=l_{j-1}+1}^{l_j}\lambda_i\cdot f_{(\rho_{i}-\rho_{i-1})k}(X)$ is equal to that for $f_{(\rho_{l_j}-\rho_{l_{j-1}})k}(X)$.
\end{claim}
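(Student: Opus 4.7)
The plan is to show that with budget $xk$, both the optimum value and the best value any efficient algorithm can attain coincide between $f_{sk}$ and $g(X) := \sum_{i=l_{j-1}+1}^{l_j} \lambda_i f_{\Delta_i k}(X)$; this forces the best-achievable approximation ratios to coincide. I set $s := \rho_{l_j} - \rho_{l_{j-1}}$ and $\Delta_i := \rho_i - \rho_{i-1}$ for $i\in\{l_{j-1}+1,\dots,l_j\}$, so that $\lambda_i = \Delta_i/s$ and $\sum_i \Delta_i = s$, and I focus on the regime $x \le s$ (the case $x > s$ saturates at OPT $=1$ on both sides by monotonicity and is handled analogously).

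For the optima, property (i) of Theorem~\ref{thm:basic_hardness} gives $f_{sk}(O_{xk}) = x/s$. Since the sub-instances of $g$ lie on disjoint ground sets $V_i$, any feasible budget split $b_i$ with $\sum_i b_i = xk$ and $b_i \le \Delta_i k$ yields $g$-value $\sum_i \lambda_i(b_i/(\Delta_i k)) = \sum_i b_i/(sk) = x/s$; the split $b_i = \lambda_i xk$ is feasible, so $\max g = x/s$ as well.

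For the algorithmic upper bounds, property (ii) gives at most $1 - e^{-x/s} + \epsilon$ on $f_{sk}$ with budget $xk$. For $g$, any algorithm's output decomposes as $\cup_i X_i$ with $|X_i| = b_i$ and $\sum_i b_i = xk$, and applying (ii) per sub-instance bounds the total value by $\sum_i \lambda_i(1 - e^{-b_i/(\Delta_i k)} + \epsilon)$. Substituting $y_i := b_i/(\Delta_i k)$ turns the maximization into the strictly concave program $\max \tfrac{1}{s}\sum_i \Delta_i(1-e^{-y_i})$ subject to $\sum_i \Delta_i y_i \le x$, whose unique maximizer is $y_i \equiv x/s$ with value $1 - e^{-x/s}$. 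Hence both algorithmic bounds equal $1-e^{-x/s}+O(\epsilon)$, and both are asymptotically attained by greedy.

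The main obstacle is this last assertion on the $g$-side: verifying that greedy on $g$ actually realizes the uniform split $b_i = \lambda_i xk$ rather than some suboptimal allocation. This follows because greedy on $g$ equalizes the $\lambda_i$-weighted marginal gains across the $V_i$'s, which on the hard instances of Theorem~\ref{thm:basic_hardness} equalizes $e^{-b_i/(\Delta_i k)}$ and hence $y_i$; the uniqueness argument at the end of the proof of Theorem~\ref{thm:many_budgets}, applied to the strictly concave utilities $u_i(b) = \lambda_i(1-e^{-b/(\Delta_i k)})$, identifies this with the Lagrangian optimum. Combining the matched optima with the matched algorithmic values yields best-achievable ratio $s(1-e^{-x/s})/x$ on both sides, proving the claim.
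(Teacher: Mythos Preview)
Your proposal is correct and follows essentially the same route as the paper: both reduce the question to the concave program $\max\sum_i \lambda_i(1-e^{-x_i/\Delta_i})$ subject to $\sum_i x_i=x$ and identify the uniform split $x_i/\Delta_i\equiv x/s$ as optimal with value $1-e^{-x/s}$, matching the single-instance bound. The only cosmetic differences are that the paper obtains the upper bound via Jensen's inequality (rather than your first-order/strict-concavity argument) and leaves the equality of the OPT values implicit, whereas you verify it explicitly and also spell out that greedy realizes the equalizing split.
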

\begin{proof}
 For any budget $x\cdot k$, the best achievable approximation ratio for $\sum_{i=l_{j-1}+1}^{l_j}\lambda_i\cdot f_{(\rho_{i}-\rho_{i-1})k}(X)$ is
\[
    \max_{x_i\textrm{'s}}\sum_{i=l_{j-1}+1}^{l_j}\lambda_i(1-e^{-\frac{x_i}{\rho_i-\rho_{i-1}}})\quad\textrm{s.t. $\sum_{i=l_{j-1}+1}^{l_j}x_i=x$ and $x_i$'s are non-negative}.
\]
For any feasible $x_i$'s,
\begin{align*}
    \sum_{i=l_{j-1}+1}^{l_j}\lambda_i(1-e^{-\frac{x_i}{\rho_i-\rho_{i-1}}})&\le 1-e^{-\sum_{i=l_{j-1}+1}^{l_j}\frac{\lambda_i\cdot x_i}{\rho_i-\rho_{i-1}}} && \text{(Jensen's inequality and $\sum_{i=l_{j-1}+1}^{l_j}\lambda_i=1$)} \\
    &=1-e^{-\sum_{i=l_{j-1}+1}^{l_j}\frac{x_i}{\rho_{l_j}-\rho_{l_{j-1}}}} && \text{(By definition of $\lambda_i$)} \\
    &=1-e^{-\frac{x}{\rho_{l_j}-\rho_{l_{j-1}}}} && \text{(By $\sum_{i=l_{j-1}+1}^{l_j}x_i=x$)}.
\end{align*}
Moreover, when $\frac{x_i}{\rho_i-\rho_{i-1}}$ for all $i$ are equal to each other, we have that $\frac{x_i}{\rho_i-\rho_{i-1}}=\frac{\sum_{i=l_{j-1}+1}^{l_j} x_i}{\sum_{i=l_{j-1}+1}^{l_j} \rho_i-\rho_{i-1}}=\frac{x}{\rho_{l_j}-\rho_{l_{j-1}}}$, and then $\sum_{i=l_{j-1}+1}^{l_j}\lambda_i(1-e^{-\frac{x_i}{\rho_i-\rho_{i-1}}})=1-e^{-\frac{x}{\rho_{l_j}-\rho_{l_{j-1}}}}$. Hence, $1-e^{-\frac{x}{\rho_{l_j}-\rho_{l_{j-1}}}}$ is exactly the best achievable approximation ratio for $\sum_{i=l_{j-1}+1}^{l_j}\lambda_i\cdot f_{(\rho_{i}-\rho_{i-1})k}(X)$. Notice that it is also the best achievable approximation ratio for $f_{(\rho_{l_j}-\rho_{l_{j-1}})k}(X)$. 
\end{proof}
Henceforth, we can replace each $f_{(\rho_{l_j}-\rho_{l_{j-1}})k}$ with $\sum_{i=l_{j-1}+1}^{l_j}\lambda_i\cdot f_{(\rho_{i}-\rho_{i-1})k}(X)$ in $f^*$, which reduces $f^*$ to the standard form. Then, Eq.~\eqref{eq:program_constraint} follows by definition of $\lambda_i$'s and the monotonicity of $\alpha_j/(\rho_{l_j}-\rho_{l_{j-1}})$. Finally, we note that Eq.~\eqref{eq:program_constraint} implies that optimal value of the $f^*$ for budget $\rho_ik$ is $\sum_{j=1}^i\beta_j$ and that for any $i$, whenever $V_i$ is used by the greedy algorithm, so should the $V_{i'}$'s for any $i'\le i$.

\subsubsection*{Formulating the mathematical program}
For each budget $\rho_i\cdot k$, the best possible approximation ratio is achieved by choosing elements from the first $l$ subsets $V_1,\dots,V_l$ for certain $l\le m$ (which we do not know a priori), 
and the budget should be split in a way such that the marginal contribution from the next element is (approximately) equal among $V_1,\dots,V_l$. That is, 
\begin{align*}
    \frac{d(\beta_1(1-e^{-x_1^{(i,l)}/\rho_1}))}{dx_1^{(i,l)}}&=\frac{d(\beta_j(1-e^{-x_j^{(i,l)}/(\rho_j-\rho_{j-1})}))}{dx_j^{(i,l)}},\,\,\forall\, j\le l,\\
    \sum_{j\le l} x_j^{(i,l)}&=\rho_i, \textrm{ and } x^{(i,l)}_j\ge 0,\,\,\forall j\le l.
\end{align*}
Solving the system of equations in the above constraint gives us
\begin{equation}\label{eq:solution_balance_equation}
\begin{split}
&\frac{x_1^{(i,l)}}{\rho_1}=\frac{\rho_i}{\rho_l}-\sum_{j=1}^l \ln\left(\frac{\beta_j\rho_1}{\beta_1(\rho_j-\rho_{j-1})}\right)\cdot \frac{\rho_j-\rho_{j-1}}{\rho_l},\\
&\frac{x_j^{(i,l)}}{\rho_j-\rho_{j-1}}=\frac{x_1^{(i,l)}}{\rho_1}+\ln\left(\frac{\beta_j\rho_1}{\beta_1(\rho_j-\rho_{j-1})}\right),\,\,\forall\, j\le l.
\end{split}
\end{equation}

We let $h^{(i,l)}(\beta_1,\dots,\beta_m)$ denote the approximation ratio achieved by $x^{(i,l)}_j$'s, then it is given by
\begin{align*}
    h^{(i,l)}(\beta_1,\dots,\beta_m)&=\frac{\sum_{j=1}^l\beta_j(1-e^{-\frac{x^{(i,l)}_j}{\rho_j-\rho_{j-1}}})}{\sum_{j=1}^i \beta_j} \\
    &=\frac{\sum_{j=1}^l\beta_j(1-e^{-\frac{x_1^{(i,l)}}{\rho_1}}\cdot\frac{\beta_1(\rho_j-\rho_{j-1})}{\beta_j\rho_1})}{\sum_{j=1}^i \beta_j} && \text{(Solution of $x^{(i,l)}_j$)} \\
    &=\frac{\sum_{j=1}^l\beta_j-\sum_{j=1}^l\frac{\beta_1(\rho_j-\rho_{j-1})}{\rho_1}\cdot e^{-\frac{x_1^{(i,l)}}{\rho_1}}}{\sum_{j=1}^i \beta_j} \\
    &=\frac{\left(\sum_{j=1}^l\beta_j\right)-\beta_1\cdot\frac{\rho_l}{\rho_1}\cdot e^{-\frac{x_1^{(i,l)}}{\rho_1}}}{\sum_{j=1}^i \beta_j} && \text{(Telescoping sum)} \\
    &=\frac{\left(\sum_{j=1}^{l}\beta_j\right)-\beta_1\cdot\frac{\rho_l}{\rho_1}\cdot e^{-\frac{\rho_i}{\rho_l}+\sum_{j=1}^{l}\ln\left(\frac{\beta_j\rho_1}{\beta_1(\rho_j-\rho_{j-1})}\right)\cdot\frac{\rho_j-\rho_{j-1}}{\rho_l}}}{\sum_{j=1}^i \beta_j} && \text{(Solution of $x^{(i,l)}_1$)},\\
\end{align*}
where the nominator is the value achieved by $x^{(i,l)}_j$'s, and the denominator is the optimal value. Since we do not know the right choice of $l$ a priori, we will enumerate all possible choices of $l$ and pick the best. Moreover, for every $l\le m$, we consider $l$ as a candidate choice only if the solutions of $x^{(i,l)}_j$'s by Eq.~\eqref{eq:solution_balance_equation} are non-negative, because this holds for the right choice of $l$. (Note that $l=1$ is always a candidate choice, because it means that all budget are spent on the first sub-instance, and hence $x^{(i,1)}_1=\rho_i\ge 0$.) Therefore, we let $h^{(i)}$ denote the best approximation ratio for budget $\rho_i k$, then it is given by
\begin{align*}
    h^{(i)}(\beta_1,\dots,\beta_m)&=\max_{1\le l\le m} h^{(i,l)}(\beta_1,\dots,\beta_m)\cdot \mathds{1}[x^{(i,l)}_l\ge 0] \\
    &=\max_{1\le l\le m} h^{(i,l)}(\beta_1,\dots,\beta_m)-C\cdot\mathds{1}[x^{(i,l)}_l<0] && \text{($C$ is a large constant)},
\end{align*}
where $x^{(i,l)}_l$ can be represented as a function of $\beta_i$'s. Note that we only restrict $x^{(i,l)}_l$ to be non-negative, which actually implies that every $x^{(i,l)}_j$ for $j\le l$ is non-negative, by Eq.\eqref{eq:program_constraint} and Eq.\eqref{eq:solution_balance_equation}.
Finally, the expected approximation ratio $h$ is given by $h(\beta_1,\dots,\beta_m)=\sum_{i=1}^m p_i\cdot h^{(i)}(\beta_1,\dots,\beta_m)$. Given any fixed $\rho_i$'s, the worst possible optimal average approximation ratio is the result of the following program
\[
    \min_{\beta_1,\dots,\beta_m\ge0} h(\beta_1,\dots,\beta_m) \textrm{ s.t. Eq.\eqref{eq:program_constraint} and Eq.\eqref{eq:solution_balance_equation}}.
\]

\subsection{Empirical results}\label{sub:numerical}
We solve this program numerically for various distributions of budget perturbations ($\rho_i$'s); the results are summarized in Table~\ref{table:simulation_result}.

\begin{table}
\begin{center}
\caption{\label{table:democrat_budget} Campaign Budgets (in millions)}
\begin{tabular}{ |c|c|c|c|c|c|c|c|c|c|c| } 
 \hline
 \textbf{Candidate} & Bennet & Biden & Bloomberg & Buttigieg & Gabbard \\
 \hline
 \textbf{Budget} & 2.6 & 23.3	& 188.4	& 34.1 & 2.9 \\
 \hline
 Klobuchar & Patrick & Sanders & Steyer & Warren & Yang \\
 \hline
 10.1	& 0.9 & 50.1 & 153.7 & 33.7 & 19.2 \\
 \hline
\end{tabular}
\end{center}
\end{table}

\begin{description}
\item[Canonical distributions] It is natural to ask what is the expected approximation factor when the budget is drawn from uniform over $[x,10x]$. Since we don't know how to compute this value exactly, we take discretization of this distribution namely 25 budgets%
\footnote{We use a somewhat sparse discretization since the program is non-convex.} evenly spaced between $x$ and $10x$. Similarly, we experiment with discretizations of log-scale uniform distributions over $[x,10x]$ and $[x,600x]$.
\item[Top social/political campaigns on Facebook] With the application of influence maximization on social networks in mind, we use the budgets of the top ten campaigns on Facebook's database of social/political campaigns%
\footnote{Top ten amount spent during the month before Mar. 26, 2020 in the Facebook Ad report~\cite{2020facebook}.}.
\item[2020 Democratic Party presidential candidates] We use reported total campaign budgets by candidates in the 2020 Democratic Party primary elections during months October-December 2019~\cite{2020election} (see Table~\ref{table:democrat_budget}).%
\end{description}

\medskip
 
\bibliographystyle{alpha}
\bibliography{MasterBib}

\appendix
\section{Explicit Analysis for Two Budgets}\label{sec:2-budgets}
In this section, we establish an analytic formula for the optimal expected approximation ratio for the uniform distribution of two budgets $k_1=\rho\cdot k_2<k_2$ for every sufficiently large $k_1$ and $k_2-k_1$ that grow with the size of instance. This is done by implementing the analysis of Theorem~\ref{thm:many_budgets} explicitly. We start by re-stating the hard instances.
\paragraph{Construction of hard instances.} Using Theorem~\ref{thm:basic_hardness}, we create two hard (with respect to an arbitrary efficient algorithm) submodular functions $f_{k_1}$ and $f_{k_2-k_1}$ with disjoint ground sets $V_1$ and $V_2$. To simplify notation, we let $f_1$ and $f_2$ denote $f_{k_1}$ and $f_{k_2-k_1}$ respectively. Then, we normalize the two functions such that $f_1(O^{(1)})=f_2(O^{(2)})=1$, where $O^{(1)}$ denotes the optimal size-$k_1$ solution for $f_{1}$ and $O^{(2)}$ is the optimal size-$(k_2-k_1)$ solution for $f_{2}$. Furthermore, we extend both the functions to the ground set $V:=V_1\cup V_2$ in a natural way that the sets of $V_2$ have zero value to $f_1$ and vice versa. Finally, for $\alpha>0$, we define $f:V\to\R_{\ge 0}$ as $f(X)=\alpha\cdot f_1(X)+f_2(X)$. \\

\begin{lemma}\label{lem:gap_submod}
For any efficient algorithm $\mathcal{A}$, there is a submodular function $f$ constructed as above that has the following properties:
\begin{enumerate}[label=(\roman*)]
    \item For every $k=\beta\cdot k_2$ with $0\le\beta\le 1$, the optimal value of $f$ with budget $k$ is $\min\{1,\beta/(1-\rho)\}+\max\{(\beta-1+\rho)/\rho,0\}\cdot\alpha$ if $\alpha<\rho/(1-\rho)$ and is $\min\{1,\beta/\rho\}\cdot\alpha+\max\{(\beta-\rho)/(1-\rho),0\}$ otherwise.
    \item For every $k=\beta\cdot k_2$ with $0\le\beta\le 1$, the solution value of $\mathcal{A}$ on $f$ with budget $k$ is upper bounded by $\max\limits_{0\le x\le\beta} (1-e^{-x/\rho})\cdot \alpha+(1-e^{(x-\beta)/(1-\rho)})$.
\end{enumerate}
\end{lemma}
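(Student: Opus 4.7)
The plan is to exploit the product-like structure of $f(X)=\alpha f_1(X)+f_2(X)$, where $f_1$ and $f_2$ are supported on disjoint ground sets $V_1$ and $V_2$. For any set $X \subseteq V$, writing $X_1:=X\cap V_1$ and $X_2:=X\cap V_2$ of sizes $b_1$ and $b_2$ with $b_1+b_2\le k=\beta k_2$, we have $f(X)=\alpha f_1(X_1)+f_2(X_2)$. Hence both parts of the lemma reduce to optimizing over the budget split $(b_1,b_2)$.

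For part (i), I would use property (i) of Theorem~\ref{thm:basic_hardness} applied to $f_1$ and $f_2$: the best attainable values are $f_1(X_1)\le \min(b_1/k_1,1)$ and $f_2(X_2)\le \min(b_2/(k_2-k_1),1)$, with equality achievable by taking sub-optimal-sets. Substituting $k_1=\rho k_2$, $k_2-k_1=(1-\rho)k_2$, and normalizing $x=b_1/k_2$, $y=b_2/k_2$, the problem becomes
\[
\max_{x+y=\beta,\ x,y\ge 0}\ \alpha\cdot\min(x/\rho,1)+\min(y/(1-\rho),1).
\]
Each term is piecewise linear and concave, so the optimum is determined by which ``rate per unit budget'' is larger: $\alpha/\rho$ for $V_1$ versus $1/(1-\rho)$ for $V_2$. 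A two-case split on whether $\alpha < \rho/(1-\rho)$ or not yields the two expressions in the lemma, filling the higher-rate side first and spilling into the other once it saturates at $1$.

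For part (ii), I apply property (ii) of Theorem~\ref{thm:basic_hardness} to each component: for any efficient algorithm $\mathcal{A}$, the restriction to $V_j$ can be viewed as an algorithm that outputs a size-$b_j$ set and hence achieves at most $(1-e^{-b_j/k_j}+\epsilon)\cdot f_j(O^{(j)})=(1-e^{-b_j/k_j}+\epsilon)$. Thus
\[
f(X)\ \le\ \alpha(1-e^{-b_1/k_1})+(1-e^{-b_2/(k_2-k_1)})+O(\epsilon).
\]
Reparameterizing $x=b_1/k_2$, so $b_1/k_1=x/\rho$ and $b_2/(k_2-k_1)=(\beta-x)/(1-\rho)$, and maximizing over $x\in[0,\beta]$ gives exactly the stated upper bound. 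One has to be slightly careful here: a priori $\mathcal{A}$ is a single joint algorithm on $V$, not two independent sub-algorithms. However, since $f_1$ is identically zero on $V_2$ and vice versa, any query $\mathcal{A}$ makes to $f$ can be simulated by separate queries to $f_1$ and $f_2$; in particular the two restrictions $\mathcal{A}|_{V_1}$ and $\mathcal{A}|_{V_2}$ each constitute a valid efficient algorithm for the corresponding sub-instance, so Theorem~\ref{thm:basic_hardness}(ii) applies to each.

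The main obstacle I anticipate is this last reduction: justifying that the hardness for the ``combined'' instance $f$ follows from independent hardness of the components. This should follow from the disjoint-support construction (queries to $f$ decompose additively), together with the fact that the union of two efficient algorithms on disjoint instances is still efficient. Once this is in place, both (i) and (ii) are essentially bookkeeping on a one-variable piecewise optimization.
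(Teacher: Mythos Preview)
Your proposal is correct and follows essentially the same route as the paper. Both arguments decompose $f(X)=\alpha f_1(X\cap V_1)+f_2(X\cap V_2)$, then for part~(i) compare the per-unit rates $\alpha/(\rho k_2)$ versus $1/((1-\rho)k_2)$ to decide which sub-optimum to fill first, and for part~(ii) apply Theorem~\ref{thm:basic_hardness}(ii) componentwise to bound each piece by $1-e^{-b_j/\Delta_j}$ before maximizing over the split. Your treatment is in fact slightly more careful than the paper's: you flag the need to check that the joint algorithm $\mathcal{A}$ restricted to each $V_j$ is still an efficient algorithm to which Theorem~\ref{thm:basic_hardness} applies, whereas the paper simply invokes the theorem directly on each component without comment.
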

\begin{proof}
\textit{(i)} By the first property in Theorem~\ref{thm:basic_hardness}, the optimal values of $f_1,f_2$ grow linearly with the budget. Moreover, the marginal gain of an element of $O^{(1)}$ is $\alpha/k_1=\alpha/(\rho\cdot k_2)$ until we select all $O^{(1)}$ and is zero after that. The marginal gain of an element in $O^{(2)}$ is $1/(k_2-k_1)=1/((1-\rho)k_2)$ until $O^{(2)}$ is exhausted. Hence, if $\alpha< \rho/(1-\rho)$, the optimal solution to $f$ should prefer the elements of $O^{(2)}$ until it exhausts $O^{(2)}$, and then spend the rest of budget on $O^{(1)}$. Hence, the optimal value is $\min\{1,\beta/(1-\rho)\}+\max\{(\beta-1+\rho)/\rho,0\}\cdot\alpha$. The other case is similar.

\textit{(ii)} Suppose $x\cdot k_2=(x/\rho) k_1$ elements are chosen from $V_1$, then, the remaining $((\beta-x)/(1-\rho))k_2$ elements are from $V_2$. By the second property in Theorem~\ref{thm:basic_hardness}, the value we can obtain is at most $(1-e^{-x/\rho})\cdot \alpha+(1-e^{(x-\beta)/(1-\rho)})$. Therefore, the maximum of this objective is an upper bound of the optimum of $f$ with budget $k$.
\end{proof}

With this lemma, we can easily prove a parametrized hardness result.

\begin{proposition}\label{prop:hardness_submod}
Given any $0<\rho<1$ and $\alpha(1-\rho)/\rho\ge 1$, for any $\epsilon>0$, there is no efficient algorithm can approximate submodular maximization problem for two budgets $k_2$ and $k_1=\rho\cdot k_2$, with the average approximation ratio is larger than
\begin{enumerate}[label=(\roman*)]
    \item 
        \begin{equation}\label{eq:hardness_ratio_1}
        \begin{split}
            \frac{1}{2}&\Bigg(\frac{\left(1-e^{-\rho}\left(\frac{\rho}{\alpha(1-\rho)}\right)^{1-\rho}\right)\alpha+1-e^{-\rho}\left(\frac{\alpha(1-\rho)}{\rho}\right)^\rho}{\alpha}\\
            &+\frac{\left(1-e^{-1}\left(\frac{\rho}{\alpha(1-\rho)}\right)^{1-\rho}\right)\alpha+1-e^{-1}\left(\frac{\alpha(1-\rho)}{\rho}\right)^\rho}{\alpha+1}\Bigg)+\epsilon,
        \end{split}
        \end{equation}
        if $\alpha(1-\rho)/\rho\le e$,
    \item
        \begin{equation}\label{eq:hardness_ratio_2}
        \begin{split}
            \frac{1}{2}\Bigg((1-e^{-1})
            +\frac{\left(1-e^{-1}\left(\frac{\rho}{\alpha(1-\rho)}\right)^{1-\rho}\right)\alpha+1-e^{-1}\left(\frac{\alpha(1-\rho)}{\rho}\right)^\rho}{\alpha+1}\Bigg)+\epsilon,
        \end{split}
        \end{equation}
        if $e\le\alpha(1-\rho)/\rho\le e^{1/\rho}$,
    \item
        \begin{equation}\label{eq:hardness_ratio_3}
        \begin{split}
            \frac{1}{2}\bigg((1-e^{-1})
            +(1-e^{-1/\rho})\cdot\frac{\alpha}{\alpha+1}\bigg)+\epsilon,
        \end{split}
        \end{equation}
        if $\alpha(1-\rho)/\rho\ge e^{1/\rho}$.
\end{enumerate}
\end{proposition}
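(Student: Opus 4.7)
The plan is to instantiate Lemma~\ref{lem:gap_submod} at $\beta = \rho$ (budget $k_1$) and $\beta = 1$ (budget $k_2$), compute the optimum values exactly, and then upper bound the algorithm's performance by solving the one-dimensional maximization
\[
g(\beta) := \max_{0 \le x \le \beta} \bigl( 1 - e^{-x/\rho} \bigr)\alpha + \bigl( 1 - e^{(x-\beta)/(1-\rho)} \bigr).
\]
Under the standing assumption $\alpha(1-\rho)/\rho \ge 1$, the second branch of Lemma~\ref{lem:gap_submod}(i) applies, giving $\mathrm{OPT}(k_1) = \alpha$ and $\mathrm{OPT}(k_2) = \alpha + 1$. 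So the average approximation ratio is at most $\tfrac12\bigl( g(\rho)/\alpha + g(1)/(\alpha+1) \bigr) + \epsilon$, where the $+\epsilon$ absorbs the looseness from Theorem~\ref{thm:basic_hardness}.

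Next I would differentiate the objective in $g(\beta)$ with respect to $x$ and solve the first-order condition $\tfrac{\alpha}{\rho} e^{-x/\rho} = \tfrac{1}{1-\rho} e^{(x-\beta)/(1-\rho)}$, which yields the critical point
\[
x^*(\beta) = \rho\beta + \rho(1-\rho) \ln\!\Bigl(\tfrac{\alpha(1-\rho)}{\rho}\Bigr).
\]
The objective is strictly concave in $x$, so $x^*(\beta)$ is the unconstrained maximizer. Since $\alpha(1-\rho)/\rho \ge 1$ forces $x^*(\beta) \ge 0$, the only question is whether $x^*(\beta) \le \beta$: a direct computation shows this is equivalent to $\alpha(1-\rho)/\rho \le e^{\beta/\rho}$. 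Specializing, the interior optimum at $\beta = \rho$ is valid precisely when $\alpha(1-\rho)/\rho \le e$, and at $\beta = 1$ precisely when $\alpha(1-\rho)/\rho \le e^{1/\rho}$. This is exactly the case split in the proposition.

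Then I would just substitute $x^*(\beta)$ back into the objective. Using
\[
e^{-x^*(\beta)/\rho} = e^{-\beta}\Bigl(\tfrac{\rho}{\alpha(1-\rho)}\Bigr)^{1-\rho}, \qquad e^{(x^*(\beta)-\beta)/(1-\rho)} = e^{-\beta}\Bigl(\tfrac{\alpha(1-\rho)}{\rho}\Bigr)^{\rho},
\]
one obtains the closed form
\[
g(\beta) = \Bigl(1 - e^{-\beta}\bigl(\tfrac{\rho}{\alpha(1-\rho)}\bigr)^{1-\rho}\Bigr)\alpha + 1 - e^{-\beta}\bigl(\tfrac{\alpha(1-\rho)}{\rho}\bigr)^{\rho}
\]
in the interior regime. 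When $x^*(\beta) > \beta$, concavity forces the constrained maximum at $x = \beta$, giving $g(\beta) = (1 - e^{-\beta/\rho})\alpha$, whose ratio against $\mathrm{OPT}$ simplifies to $(1 - e^{-\beta/\rho})\alpha/\mathrm{OPT}$. Plugging $\beta \in \{\rho, 1\}$ and the two regime thresholds into $\tfrac12\bigl( g(\rho)/\alpha + g(1)/(\alpha+1)\bigr)$ yields exactly formulas~\eqref{eq:hardness_ratio_1},~\eqref{eq:hardness_ratio_2},~\eqref{eq:hardness_ratio_3} for the three cases.

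The main obstacle is purely bookkeeping: getting the exponents $1-\rho$ and $\rho$ right when substituting $x^*$ back, and being careful that when the corner solution $x = \beta$ is used for $\beta = \rho$ (Case (ii)), the ratio $(1-e^{-1})\alpha/\alpha = 1 - e^{-1}$ appears, and analogously for $\beta = 1$ (Case (iii)). No technical difficulty beyond that should arise, since the concavity of the objective in $x$ makes the case analysis clean.
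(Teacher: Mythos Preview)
Your proposal is correct and follows essentially the same approach as the paper's proof: both invoke Lemma~\ref{lem:gap_submod} to get $\mathrm{OPT}(k_1)=\alpha$, $\mathrm{OPT}(k_2)=\alpha+1$, then solve the one-variable concave maximization by the first-order condition, obtaining the same critical point (the paper writes it as $(\ln(\alpha(1-\rho)/\rho)+\beta/(1-\rho))/(1/(1-\rho)+1/\rho)$, which simplifies to your $\rho\beta+\rho(1-\rho)\ln(\alpha(1-\rho)/\rho)$), and then perform the identical case split on whether $x^*(\beta)\le\beta$ at $\beta\in\{\rho,1\}$. Your write-up is in fact slightly more explicit than the paper's, which leaves the substitution of $x^*$ and the resulting closed forms to the reader.
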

\begin{proof}
Since $\alpha(1-\rho)/\rho\ge 1$, by the first property in Lemma~\ref{lem:gap_submod}, we know the optimal value for budget $k_1$ is $\alpha$ and that for budget $k_2$ is $\alpha+1$. We can maximize the best achievable solution value $(1-e^{-x/\rho})\cdot \alpha+(1-e^{(x-\beta)/(1-\rho)})$ for $\beta=\rho$ and $1$ by standard calculus. In general, we find that the optimal $x$ is $(\ln(\alpha(1-\rho)/\rho)+\beta/(1-\rho))/(1/(1-\rho)+1/\rho)$. Then, we observe that if $1\le \alpha(1-\rho)/\rho\le e$, the optimal $x$ for $\beta=\rho$ is between $0$ and $\rho$, and that for $\beta=1$ is between $\rho$ and $1$. Hence the optimal $x$ for both $\beta$ are feasible, and we can calculate the analytic formula of each maximum. Therefore, we have an upper bound of approximation ratio for each budget. Obviously, the average of these two upper bounds, which is given in Eq.~\eqref{eq:hardness_ratio_1}, is an upper bound for the average approximation ratio. If $e<\alpha(1-\rho)/\rho\le e^{1/\rho}$, then the optimal $x$ is $\rho$ when $\beta=\rho$ and is still $(\ln(\alpha(1-\rho)/\rho)+\beta/(1-\rho))/(1/(1-\rho)+1/\rho)$ when $\beta=1$. As before, we can calculate the upper bound, which is given in Eq.~\eqref{eq:hardness_ratio_2}. Finally, if $\alpha(1-\rho)/\rho\ge e^{1/\rho}$, then the optimal $x$ is $\rho$ when $\beta=\rho$ and is $1$ when $\beta=1$. The corresponding upper bound is given in Eq.~\eqref{eq:hardness_ratio_3}.
\end{proof}
Next, we derive the closed-form parametrized formulas of the approximation ratios of the greedy algorithm for monotone submodular maximization with two budgets, which will match the hardness in Proposition~\ref{prop:hardness_submod}. Before that, we establish a useful lemma for greedy analysis.
\begin{lemma}\label{lem:greedy}
Given the same conditions as in Lemma~\ref{lem:greedy_one_step}, for all $k_1=\theta\cdot k_2$ with $\theta\ge 0$ and $k=\eta\cdot k_1$ with $0\le\eta\le1$, the following inequality holds,
\[
f(X_{k_1})\ge (1-e^{\theta\cdot\eta-\theta})f(O_{k_2})+e^{\theta\cdot\eta-\theta}\cdot f(X_{k}),
\]
and in particular, by letting $\eta=0$,
\[
f(X_{k_1})\ge (1-e^{-\theta})f(O_{k_2}).
\]
\end{lemma}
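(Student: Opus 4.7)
The plan is to iterate the one-step greedy inequality from Lemma~\ref{lem:greedy_one_step} over the iterations between step $k$ and step $k_1$, using $k_2$ as the reference cardinality in that inequality.

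Concretely, for every index $i$ with $k < i \le k_1$, Lemma~\ref{lem:greedy_one_step} (applied with cardinality $k_2$) gives
\[
f(X_i) - f(X_{i-1}) \ge \frac{1}{k_2}\bigl(f(O_{k_2}) - f(X_{i-1})\bigr),
\]
which rearranges to the contraction
\[
f(O_{k_2}) - f(X_i) \le \left(1 - \tfrac{1}{k_2}\right)\bigl(f(O_{k_2}) - f(X_{i-1})\bigr).
\]
Telescoping this from $i = k+1$ up to $i = k_1$ yields
\[
f(O_{k_2}) - f(X_{k_1}) \le \left(1 - \tfrac{1}{k_2}\right)^{k_1 - k}\bigl(f(O_{k_2}) - f(X_k)\bigr).
\]

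Next I would use the standard estimate $(1 - 1/k_2)^{k_1 - k} \le e^{-(k_1 - k)/k_2}$ and substitute $k_1 = \theta k_2$, $k = \eta k_1 = \eta\theta k_2$, so that $(k_1 - k)/k_2 = \theta - \theta\eta$. This gives
\[
f(O_{k_2}) - f(X_{k_1}) \le e^{\theta\eta - \theta}\bigl(f(O_{k_2}) - f(X_k)\bigr),
\]
and rearranging produces exactly the claimed bound $f(X_{k_1}) \ge (1 - e^{\theta\eta - \theta}) f(O_{k_2}) + e^{\theta\eta - \theta} f(X_k)$. The special case $\eta = 0$ corresponds to $X_k = \emptyset$ (with $f(\emptyset) = 0$, which we may assume by monotonicity and a trivial shift), recovering the classical $(1 - e^{-\theta})$ guarantee.

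There is essentially no obstacle here: the argument is just the textbook greedy analysis run between two arbitrary iteration counts rather than between $0$ and $k$, and the only minor bookkeeping point is to verify that the exponent works out to $\theta\eta - \theta$ after substituting the relations $k_1 = \theta k_2$ and $k = \eta k_1$.
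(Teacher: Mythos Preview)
Your proposal is correct and follows essentially the same argument as the paper: apply the one-step greedy inequality with reference cardinality $k_2$, rewrite it as a contraction on $f(O_{k_2})-f(X_i)$, telescope from $i=k$ to $i=k_1$, and bound $(1-1/k_2)^{k_1-k}$ by $e^{-(k_1-k)/k_2}=e^{\theta\eta-\theta}$. For the $\eta=0$ case you do not even need to shift so that $f(\emptyset)=0$: since $f$ is nonnegative by assumption, the term $e^{-\theta}f(X_0)\ge 0$ can simply be dropped.
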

\begin{proof}
We start from Lemma~\ref{lem:greedy_one_step},
\[ 
    f(X_i)-f(X_{i-1})\ge
    \frac{1}{k_2}(f(O_{k_2})-f(X_{i-1})).
\]
We rearrange the terms as follows,
\[
    f(O_{k_2})-f(X_{i}) \le \left(1-\frac{1}{k_2}\right)\cdot(f(O_{k_2})-f(X_{i-1})),
\]
and we recursively apply this step and get
\begin{align*}
    f(O_{k_2})-f(X_{k_1}) &\le \left(1-\frac{1}{k_2}\right)^{k_1-k}\cdot(f(O_{k_2})-f(X_{k})) \\
    &\le e^{\theta-\theta\cdot\eta}(f(O_{k_2})-f(X_{k})).
\end{align*}
The proof finishes by rearranging the terms.
\end{proof}
\begin{proposition}\label{prop:two_budget}
Given a monotone submodular function $f$ and two budgets $k_2$ and $k_1=\rho\cdot k_2$ with $0<\rho<1$, we we let $X_k$ and $O_k$ denote the greedy solution and the optimal solution of cardinality $k$. Suppose $f(O_{k_1})=c\cdot f(O_{k_2})$, where $\rho\le c\le1$. Then, the greedy algorithm has the following average approximation ratios,
\begin{enumerate}[label=(\roman*)]
    \item 
        \begin{equation}\label{eq:greedy_ratio_1}
        \begin{split}
            \frac{1}{2}&\bigg(\bigg(\left(1-e^{-\rho}\left(\frac{1-\rho}{\rho(1/c-1)}\right)^\rho\right)\cdot\frac{1}{c} + e^{-\rho}\left(\frac{1-\rho}{\rho(1/c-1)}\right)^\rho\cdot\frac{1-\rho/c}{1-\rho}\bigg)\\
            &+\bigg(1-e^{-1}\left(\frac{1-\rho}{\rho(1/c-1)}\right)^\rho+e^{-1}\left(\frac{1-\rho}{\rho(1/c-1)}\right)^\rho\cdot\frac{c-\rho}{1-\rho}\bigg)\bigg),
        \end{split}    
        \end{equation}
        if $((1-\rho/c)/(1-\rho))\le 1-e^{-1}$,
    \item
        \begin{equation}\label{eq:greedy_ratio_2}
        \begin{split}
            \frac{1}{2}&\Bigg((1-e^{-1})+\left(1-e^{-1}\left(\frac{1-\rho}{\rho(1/c-1)}\right)^\rho\cdot(1-c)+e^{-1}\left(\frac{1-\rho}{\rho(1/c-1)}\right)^{1-\rho}c\right)\Bigg),
        \end{split}    
        \end{equation}
        if $1-e^{-1}\le ((1-\rho/c)/(1-\rho))\le 1/c$,
    \item
        \begin{equation}\label{eq:greedy_ratio_3}
        \begin{split}
            \frac{1}{2}&(((1-e^{-1})+(1-e^{-1/\rho})c)),
        \end{split}    
        \end{equation}
        if $((1-\rho/c)/(1-\rho))\ge 1/c$.
\end{enumerate}
Moreover, there is no efficient algorithm can achieve better approximation ratios.
\end{proposition}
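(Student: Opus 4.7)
The plan is to make the analysis underlying Theorem~\ref{thm:many_budgets} explicit in the two-budget setting and then to match the resulting expressions against Proposition~\ref{prop:hardness_submod}. Normalizing $f(O_{k_2}) = 1$ and $f(O_{k_1}) = c$, Lemma~\ref{lem:greedy_one_step} yields at every step the bound
\[
f(X_i) - f(X_{i-1}) \ge \max\left\{\frac{c - f(X_{i-1})}{k_1},\ \frac{1 - f(X_{i-1})}{k_2}\right\}.
\]
The two arguments of the max are equal precisely at $f(X_{i-1}) = r_1 c$ with $r_1 = (1-\rho/c)/(1-\rho)$, exactly the quantity appearing in Eq.~\eqref{eq:def_r_j} of Theorem~\ref{thm:many_budgets}: below the threshold the $O_{k_1}$-guarantee is tighter, while above it the $O_{k_2}$-guarantee is tighter. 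The three cases of the proposition are distinguished by when this transition occurs: before step $k_1$ (case (i), equivalent to $r_1 \le 1-e^{-1}$), strictly after step $k_1$ (case (ii)), or never within $k_2$ steps (case (iii), which forces $c = 1$).

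Within each case I would integrate the step-wise lower bound via Lemma~\ref{lem:greedy}, phase by phase. In case (i), the $O_{k_1}$-guarantee drives $f(X_i)$ up to $r_1 c$ by step $i^* = -k_1 \ln(1-r_1)$, after which the $O_{k_2}$-guarantee yields
\[
f(X_{k_1}) \ge 1 - (1-r_1 c)\,e^{-\rho}(1-r_1)^{-\rho} \quad\text{and}\quad f(X_{k_2}) \ge 1 - (1-r_1 c)\,e^{-1}(1-r_1)^{-\rho}.
\]
Substituting $1-r_1 c = (1-c)/(1-\rho)$ and $(1-r_1)^{-\rho} = ((1-\rho)/(\rho(1/c-1)))^\rho$ and dividing the $k_1$-expression by $f(O_{k_1}) = c$ produces Eq.~\eqref{eq:greedy_ratio_1} after routine algebra. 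Case (ii) replaces the first phase by pure $O_{k_1}$-analysis over the full $k_1$ steps, which contributes the $1-e^{-1}$ factor to the $k_1$-ratio; the $k_2$-ratio is then derived analogously depending on where in $[k_1, k_2]$ the transition falls. In case (iii), the $O_{k_1}$-guarantee dominates throughout and Lemma~\ref{lem:greedy} with $\theta = 1/\rho$ and $\eta = 0$ directly yields $f(X_{k_2}) \ge c(1-e^{-1/\rho})$.

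For the matching hardness I would apply Proposition~\ref{prop:hardness_submod} with the choice $\alpha = c/(1-c)$. The hypothesis $c \ge \rho$ rewrites as $\alpha(1-\rho)/\rho \ge 1$, so the hard instance from Lemma~\ref{lem:gap_submod} is available and satisfies $f(O_{k_1})/f(O_{k_2}) = \alpha/(\alpha+1) = c$. The three regimes $\alpha(1-\rho)/\rho \le e$, $e \le \alpha(1-\rho)/\rho \le e^{1/\rho}$, and $\alpha(1-\rho)/\rho \ge e^{1/\rho}$ of Proposition~\ref{prop:hardness_submod} correspond, under the substitution, exactly to the case conditions $r_1 \le 1-e^{-1}$, $r_1 \in [1-e^{-1}, 1/c]$, and $r_1 \ge 1/c$ of the present statement. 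The main obstacle will be the algebraic bookkeeping reducing Eqs.~\eqref{eq:hardness_ratio_1}--\eqref{eq:hardness_ratio_3} to Eqs.~\eqref{eq:greedy_ratio_1}--\eqref{eq:greedy_ratio_3}: with $M := \alpha(1-\rho)/\rho$, the key identity $\alpha M^{\rho-1} + M^\rho = M^\rho/(1-\rho)$ collapses the two-term hardness numerators into single-term greedy numerators and establishes that greedy's performance is exactly tight against the constructed hard instance.
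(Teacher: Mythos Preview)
Your proposal is correct and follows essentially the same approach as the paper: identify the crossover threshold $r_1 c$ from Lemma~\ref{lem:greedy_one_step}, integrate the greedy guarantee phase by phase via Lemma~\ref{lem:greedy}, and match the result against Proposition~\ref{prop:hardness_submod} under the substitution $\alpha=c/(1-c)$ (equivalently $c=\alpha/(\alpha+1)$). The only cosmetic difference is that the paper carries an implicit variable $\tau$ with $i^*-1=\tau k_1$ and the constraint $1-e^{-\tau}\le r_1$, whereas you solve for $i^*=-k_1\ln(1-r_1)$ directly; your observation that case~(iii) is degenerate ($c=1$) is correct and not stated explicitly in the paper.
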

\begin{proof}
By Lemma~\ref{lem:greedy_one_step}, we have the following two guarantees,
\begin{align*}
    &f(X_i)-f(X_{i-1})\ge
    \frac{1}{k_2}(f(O_{k_2})-f(X_{i-1})),\\
    &f(X_i)-f(X_{i-1})\ge
    \frac{1}{k_1}(f(O_{k_1})-f(X_{i-1})).
\end{align*}
Observe that $(f(O_{k_1})-f(X_{i-1}))/k_1\ge (f(O_{k_2})-f(X_{i-1}))/k_2$ holds if and only if $f(X_{i-1})\le((c-\rho)/(1-\rho))f(O_{k_2})$, where the right hand side is equal to $((1-\rho/c)/(1-\rho))f(O_{k_1})$. We let $i^*-1=\tau\cdot k_1$ be the largest $i-1$ such that $f(X_{i-1})\le((1-\rho/c)/(1-\rho))f(O_{k_1})$ holds. We first consider the case where
\begin{equation}\label{eq:greedy_case_1}
    ((1-\rho/c)/(1-\rho))\le 1-e^{-1},
\end{equation}
which implies that $\tau\le1$, because $f(X_{k_1})\ge(1-e^{-1})f(O_{k_1})$. Then, by Lemma~\ref{lem:greedy} with $\theta=\tau,\eta=0$, $f(X_{i^*-1})\ge (1-e^{-\tau})f(O_{k_1})$. It follows that
\begin{equation}\label{eq:bound_of_tau}
    ((1-\rho/c)/(1-\rho))\ge 1-e^{-\tau}.
\end{equation}
Now we apply Lemma~\ref{lem:greedy} with $\theta=\rho,\eta=i^*/k_1=\tau+o(1)$,
\begin{align*}
    f(X_{k_1})&\ge (1-e^{\rho\cdot\tau-\rho})f(O_{k_2})+e^{\rho\cdot\tau-\rho}\cdot f(X_{i^*}) \\
    &\ge (1-e^{\rho\cdot\tau-\rho})f(O_{k_2})+e^{\rho\cdot\tau-\rho} ((1-\rho/c)/(1-\rho))f(O_{k_1}) \\
    &= ((1-e^{\rho\cdot\tau-\rho})/c+e^{\rho\cdot\tau-\rho}((1-\rho/c)/(1-\rho)))f(O_{k_1}) \\
    &\ge \bigg(\left(1-e^{-\rho}\left(\frac{1-\rho}{\rho(1/c-1)}\right)^\rho\right)\cdot\frac{1}{c} + e^{-\rho}\left(\frac{1-\rho}{\rho(1/c-1)}\right)^\rho\cdot\frac{1-\rho/c}{1-\rho}\bigg)f(O_{k_1}),
\end{align*}
where the second inequality is by definition of $i^*$, and the last inequality follows from Eq.~\eqref{eq:bound_of_tau}. Then, we apply Lemma~\ref{lem:greedy} with $\theta=1,\eta=\rho$ and use the previous bound of $f(X_{k_1})$,
\begin{align*}
    f(X_{k_2})&\ge
    (1-e^{\rho-1})f(O_{k_2})+e^{\rho-1}f(X_{k_1})\\
    &\ge \bigg(1-e^{-1}\left(\frac{1-\rho}{\rho(1/c-1)}\right)^\rho+e^{-1}\left(\frac{1-\rho}{\rho(1/c-1)}\right)^\rho\cdot\frac{c-\rho}{1-\rho}\bigg)f(O_{k_2}).
\end{align*}
We relate this case, where we assume Eq.~\eqref{eq:greedy_case_1}, with the first case of Proposition~\ref{prop:hardness_submod} by noticing that $c=\alpha/(1+\alpha)$. It is straightforward to verify the approximation ratios for $f(X_{k_1})$ and $f(X_{k_2})$ match the ratios there. Therefore, greedy is optimal for this case. Next, we consider the case where
\begin{equation}\label{eq:greedy_case_2}
    1-e^{-1}\le ((1-\rho/c)/(1-\rho))\le 1/c,
\end{equation}
which implies $1\le\tau\le1/\rho$, and Eq.~\eqref{eq:bound_of_tau} still holds. In this case, we know $f(X_{k_1})\ge(1-e^{-1})f(O_{k_1})$, and similar to before, we apply Lemma~\ref{lem:greedy} with $\theta=1,\eta=i^*/k_2=\rho\cdot\tau+o(1)$,
\begin{align*}
    f(X_{k_2})&\ge (1-e^{\rho\cdot\tau-1})f(O_{k_2})+e^{\rho\cdot\tau-1}\cdot f(X_{i^*}) \\
    &\ge (1-e^{\rho\cdot\tau-1})f(O_{k_2})+e^{\rho\cdot\tau-1}(1-e^{-\tau})f(O_{k_1}) \\
    &= ((1-e^{\rho\cdot\tau-1})+e^{\rho\cdot\tau-1}(1-e^{-\tau})c)f(O_{k_2}) \\
    &\ge \left(1-e^{-1}\left(\frac{1-\rho}{\rho(1/c-1)}\right)^\rho\cdot(1-c)+e^{-1}\left(\frac{1-\rho}{\rho(1/c-1)}\right)^{1-\rho}c\right)f(O_{k_2}),
\end{align*}
where the second inequality is again by Lemma~\ref{lem:greedy} with $\theta=1,\eta=i^*/k_1=\tau+o(1)$ and the last inequality follows from Eq.~\eqref{eq:bound_of_tau}. It is not hard to verify that this case correspond to the second case of Proposition~\ref{prop:hardness_submod} and greedy has optimal ratios. Finally, we consider the last case where
\begin{equation}\label{eq:greedy_case_3}
    ((1-\rho/c)/(1-\rho))\ge 1/c.
\end{equation}
In this case, we know that $f(X_{k_1})\ge(1-e^{-1})f(O_{k_1})$ and $f(X_{k_2})\ge(1-e^{-1/\rho})f(O_{k_1})=(1-e^{-1/\rho})c\cdot f(O_{k_2})$. We conclude that greedy is optimal by comparing this case with the third case of Proposition~\ref{prop:hardness_submod}.
\end{proof}

\begin{figure}[h]
\centering
    \includegraphics[scale=0.7]{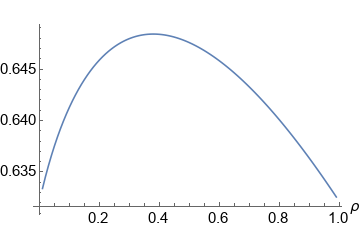}
    \caption{Optimal average approximation ratios for two budgets.}
    \label{fig:two_budget}
\end{figure}

For every $0<\rho<1$, using Proposition~\ref{prop:two_budget}, we can compute the worst $c$ to minimize the approximation ratio, and it follows that the minimal ratio is the best achievable approximation guarantee for submodular maximization with budgets $k_2$ and $k_1=\rho\cdot k_2$. It turns out that the first case of Proposition~\ref{prop:two_budget} is always the worst case. We illustrate the best achievable approximation ratios for $0.01\le\rho\le0.99$ in Figure~\ref{fig:two_budget}.

\section{Optimal Algorithms in Practical Settings}\label{sec:practical_algorithms}
In this section, we show that our main result, greedy is optimal for multiple budgets, generalizes to the constant rounds Map-Reduce algorithm in distributed setting~\cite{LV18}, and the logarithmic rounds parallel algorithm~\cite{BRS19a}. We sketch the main ideas behind these algorithms and point out how to adapt them to Observation~\ref{obs:opt_algorithms}. Before that, we provide the proof of Observation~\ref{obs:opt_algorithms}.

\begin{observation}[Observation~\ref{obs:opt_algorithms} restated]
For any perturbation factors $0<\rho_1<\rho_2<\dots<\rho_{m}$, there exists a sufficiently large $k$ that grows with the size of instance such that given $m$ budgets $k_1=\rho_1\cdot k$, \dots, $k_{m}=\rho_{m}\cdot k$, the optimality described in Theorem~\ref{thm:many_budgets} actually holds for a general class of algorithms such that:
\begin{itemize}
    \item Given budget $k_i$, the algorithm $\mathcal{A}$ runs in $T$ rounds ($T$ is sufficiently large), of which each round selects about $k_i/T$ elements.
    \item For any $\epsilon>0$, it holds for all $t\in[T]$, for all $j\in[m]$, that $f(X^{\mathcal{A}}_{tk_i/T})-f(X^{\mathcal{A}}_{(t-1)k_i/T})\ge ((1-\epsilon)\rho_i/(\rho_j T))\cdot(f(O_{k_j})-f(X^{\mathcal{A}}_{tk_i/T}))$, where $X^{\mathcal{A}}_s$ is the $s$-th element chosen by $\mathcal{A}$.
\end{itemize}
\end{observation}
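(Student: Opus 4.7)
The plan is to lift the proof of Theorem~\ref{thm:many_budgets} from the per-element guarantee in Lemma~\ref{lem:greedy_one_step} to the per-round guarantee assumed here, keeping the rest of the argument essentially intact. Fix a budget $k_i$. The second bullet of the observation gives, for every round $t$ and every $j \in [m]$, a round-level ``$l$-th guarantee'': $f(X^{\mathcal{A}}_{tk_i/T}) - f(X^{\mathcal{A}}_{(t-1)k_i/T})$ is at least $(1-\epsilon)(k_i/T)/k_j$ times the gap $f(O_{k_j}) - f(X^{\mathcal{A}}_{tk_i/T})$. Viewing each of the $T$ rounds as a super-step of size $k_i/T$, this is exactly the aggregated form of the $l$-th guarantees used in the proof of Theorem~\ref{thm:many_budgets}, except with an extra $(1-\epsilon)$ multiplicative slack. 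Consequently, the same transition analysis applies: as $f(X^{\mathcal{A}}_\cdot)$ grows, the dominant guarantee index can only move from smaller to larger indices, and the transition thresholds are characterized by the same parameters $r_j$ defined in Eq.~\eqref{eq:def_r_j}, up to the $(1-\epsilon)$ slack which we will absorb at the end.

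Next, I would re-use verbatim the hard instance construction $f = \sum_j \alpha_j \cdot f_{\Delta_j}$ from the proof of Theorem~\ref{thm:many_budgets}, together with the $\alpha_j$'s calibrated to the $r_j$'s of the arbitrary comparison function $f^{\textrm{(bad-for-greedy)}}$. The key observation is that the upper-bound analysis of how greedy spends its budget on this hard instance was already phrased at a level that does not depend on the unit size of the step: the argument that the optimal cumulative allocation $(b_1^*, \ldots, b_t^*)$ equalizes the derivatives $\frac{d u_j}{db_j}$ of the concave utilities $u_j(b) = (1-e^{-b/\Delta_j})\alpha_j$ applies to any algorithm whose gain from $f_{\Delta_j}$ after depositing cumulative budget $b_j$ into $V_j$ is bounded by $u_j(b_j)$. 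By Theorem~\ref{thm:basic_hardness} applied to each sub-instance $f_{\Delta_j}$, no efficient algorithm can exceed these utility bounds by more than an arbitrarily small additive error, so the very same uniqueness-of-maximizer argument from the proof of Theorem~\ref{thm:many_budgets} constrains $\mathcal{A}$'s cumulative budget allocation to essentially match greedy's.

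The main step I expect to require care is coordinating the round granularity $k_i/T$ with the single-step hardness of each $f_{\Delta_j}$, so that the ``best singleton gain'' comparison driving the transitions (in Eq.~\eqref{eq:V1_dominate_V2} and its analogues) continues to determine $\mathcal{A}$'s allocation at the round level. For $T$ sufficiently large relative to all the ratios $\Delta_j/\Delta_{j'}$, each round is small enough compared to every sub-instance that replacing a single greedy step by a batch of $k_i/T$ elements changes the marginal gain by only a multiplicative $1+O(1/T)$ factor, and the derivative-equalization condition is preserved up to the same error. Combining this slack with the multiplicative $(1-\epsilon)$ slack from the second bullet, and taking both $\epsilon$ and $1/T$ small relative to the target $\epsilon$ of Theorem~\ref{thm:many_budgets}, the black-box argument of that theorem carries through and yields both conditions (i) and (ii) for any algorithm in the claimed class.
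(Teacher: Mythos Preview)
Your plan is correct, but it takes a longer route than the paper. You propose to re-run the entire machinery of Theorem~\ref{thm:many_budgets} at the round level: redo the transition analysis with $T$ super-steps of size $k_i/T$, re-use the hard instance, and re-verify the derivative-equalization argument against that coarser granularity. This works, and your identification of the main technical point (absorbing the $O(1/T)$ batching error and the $(1-\epsilon)$ slack) is accurate.

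The paper instead treats Theorem~\ref{thm:many_budgets} as a black box and reduces the observation to a single recursion comparison. It introduces $\hat f(X^{\textrm{greedy}}_l)$ and $\hat f(X^{\mathcal A}_l)$, the lower bounds one gets by iterating the best available guarantee for greedy and for $\mathcal A$ respectively, and shows directly that these two sequences track each other. The only subtlety is that $\mathcal A$'s guarantee references the end-of-round value $f(X^{\mathcal A}_{tk_i/T})$ rather than the start-of-step value; the paper bounds the resulting discrepancy at round $t$ by $\epsilon_t := (\hat f(X^{\mathcal A}_{tk_i/T}) - \hat f(X^{\mathcal A}_{(t-1)k_i/T}))/k_1$, proves by induction that $\hat f(X^{\textrm{greedy}}_l) - \hat f(X^{\mathcal A}_l) \le \sum_r (k_i/T)\epsilon_r$, and telescopes to get a total error of at most $f(X^{\mathcal A}_{k_i}) \cdot \rho_i/(\rho_1 T)$, which vanishes for large $T$. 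Once $\hat f(X^{\mathcal A}_{k_i}) \approx \hat f(X^{\textrm{greedy}}_{k_i})$, every conclusion of Theorem~\ref{thm:many_budgets} transfers to $\mathcal A$ without reopening the hard-instance analysis.

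So: your approach is more self-contained but repeats work already done; the paper's approach is shorter because it isolates exactly the one new ingredient (round-level vs.\ step-level recursion) and handles it in a few lines.
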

\begin{proof}
We let $\hat{f}(X^{\textrm{greedy}}_{l})$ denote the lower bound estimate of $f(X^{\textrm{greedy}}_{l})$ that we get by iteratively applying the best greedy guarantees until the $l$-th iteration ($\hat{f}(X^{\mathcal{A}}_{l})$ is defined similarly). Note that the second property in the observation is similar to the performance guarantee of greedy algorithm with respect to each $O_{k_j}$. The only difference is that with respect to any $O_{k_j}$, in average, every element selected in round $t$ of $\mathcal{A}$ has the same guarantee $(f(O_{k_j})-\hat{f}(X^{\mathcal{A}}_{tk_i/T}))/k_j$ (ignore the $1-\epsilon$ factor), while for each $s\le k_{i}/T$, the $((t-1)k_{i}/T+s)$-th element selected by standard greedy has guarantee $(f(O_{k_j})-\hat{f}(X^{\textrm{greedy}}_{(t-1)k_{i}/T+s-1}))/k_j$. However, we can show that this difference between the two guarantees can be ignored. First, observe that $\hat{f}(X^{\mathcal{A}}_{tk_{i}/T})\le \hat{f}(X^{\textrm{greedy}}_{tk_{i}/T})$ for all $t\le T$, because greedy has better choices of guarantees than $\mathcal{A}$. Moreover, notice that for all $t$, $(f(O_{k_j})-\hat{f}(X^{\textrm{greedy}}_{(t-1)k_{i}/T+s-1}))/k_j\le (f(O_{k_j})-\hat{f}(X^{\textrm{greedy}}_{(t-1)k_{i}/T}))/k_j$, and the difference between $(f(O_{k_j})-\hat{f}(X^{\textrm{greedy}}_{(t-1)k_{i}/T}))/k_j$ and $(f(O_{k_j})-\hat{f}(X^{\mathcal{A}}_{tk_{i}/T}))/k_j$ for any $j$ is upper bounded by $(\hat{f}(X^{\mathcal{A}}_{tk_i/T})-\hat{f}(X^{\textrm{greedy}}_{(t-1)k_{i}/T}))/k_1$, which in turn is upper bounded by $\epsilon_t:=(\hat{f}(X^{\mathcal{A}}_{tk_i/T})-\hat{f}(X^{\mathcal{A}}_{(t-1)k_{i}/T}))/k_1$. Furthermore, if we iteratively apply the guarantee for each $t\le T$ and $s\le k_i/T$ for $\mathcal{A}$ as follows
\begin{align*}
    \hat{f}(X^{\mathcal{A}}_{(t-1)k_{i}/T+s})&\ge \hat{f}(X^{\mathcal{A}}_{(t-1)k_{i}/T+s-1})-\hat{f}(X^{\mathcal{A}}_{tk_{i}/T})/k_j+f(O_{k_j})/k_j \\
    &= \hat{f}(X^{\mathcal{A}}_{(t-1)k_{i}/T+s-1})-\hat{f}(X^{\textrm{greedy}}_{(t-1)k_{i}/T+s-1})/k_j+f(O_{k_j})/k_j-\epsilon_t,
\end{align*}
where for each $t$ and $s$, $j$ is chosen to be same as the best choice of $j$ for greedy in this iteration, then by an inductive argument (base case is $\hat{f}(X^{\mathcal{A}}_0)=\hat{f}(X^{\textrm{greedy}}_0)$), we have that $\hat{f}(X^{\textrm{greedy}}_{(t-1)k_{i}/T+s})-\hat{f}(X^{\mathcal{A}}_{(t-1)k_{i}/T+s})\le\sum_{r=1}^{t-1} (k_i/T)\epsilon_r+s\cdot\epsilon_t$ for each $t\le T$ and $s\le k_i/T$. By a telescoping sum, $\sum_{t=1}^T(k_i/T)\epsilon_t\le (f(X^{\mathcal{A}}_{k_i})-f(\emptyset))/(\rho_1 T/\rho_i)$, which is negligible if $T$ is sufficiently large. Therefore, the final performance guarantee of $\mathcal{A}$ is approximately equal to the final greedy guarantee.
\end{proof}

\paragraph{Map-Reduce algorithm.}
Suppose the budget is $k_i$. The setup is that there are $\sqrt{n/k_i}$ machines and a central machine, each with memory $\widetilde{O}(\sqrt{nk_i})$. The algorithm has $t=O(\frac{1}{\epsilon})$ Map-Reduce rounds and maintains a solution set $G$, which is empty initially. At the $l$-th round, the algorithm sets a threshold $\frac{1}{k_i}(1-\frac{1}{t})^l f(O_{k_i})$ and wants to add $\frac{k_i}{t}$ elements to $G$ (actually, it might differ from this amount, but this is fine as we will explain later in this paragraph), each with marginal gain above the threshold. To achieve this, each machine from its storage selects a candidate set consisting of the elements that have marginal gains above the threshold with respect to $G$ ($G$ is not updated) and sends the candidates to the central machine, and then the central machine enumerates all the candidates and adds the element to $G$ if it has marginal gain above the threshold with respect to the latest $G$. The chosen threshold is actually the greedy guarantee of marginal gain when the cumulative utility reaches $(1-(1-\frac{1}{t})^l)f(O_{k_i})$. Hence, during the enumeration procedure on the central machine, either it successfully selects $\frac{k_i}{t}$ elements with marginal contribution above $\frac{1}{k_i}(1-\frac{1}{t})^l f(O_{k_i})$, or there is no such element left, in which case the cumulative utility should already reach $(1-(1-\frac{1}{t})^l)f(O_{k_i})$. In either case, we will achieve roughly $(1-(1-\frac{1}{t})^l)f(O_{k_i})$ at the end of $l$-th round, and the final $1-1/e$ approximation ratio follows by standard greedy analysis. Two issues remain---first, we do not know $f(O_{k_i})$, which can be fixed by standard "guessing optimal value" trick, second, we need to bound the memory usage. For the ordinary machines, we can simply randomly partition the ground set, and for the central machine, this can be fixed as follows: at the beginning of each round, the central machine samples a random set $S$ of size $4\sqrt{n{k_i}}$ and sequentially adds the elements from $S$ to $G$ if the element has marginal gain above the threshold with respect to the latest $G$; If this procedure ends up selecting at least ${k_i}$ elements, the algorithm can stop, otherwise it continues as before. Using a martingale argument, it can be shown that if there are many (more than $\sqrt{n{k_i}}$) candidate elements chosen by the ordinary machines, then with high probability, the central machine should have already chosen at least ${k_i}$ elements in the above procedure.

In order to apply Observation~\ref{obs:opt_algorithms}, we need the greedy guarantees with respect to all $O_{k_j}$'s. To this end, we can guess $f(O_{k_j})$'s rather than just $f(O_{k_i})$, and moreover, we set the threshold at the $l$-th round as the largest of $\frac{1}{k_j}(f(O_{k_j})-\hat{f}(X_{k_il/t}))$ for all $j$, where $\hat{f}(X_{k_il/t})$ is the lower bound estimate of $f(X_{k_il/t})$ we get by applying best guarantee for each iteration of greedy. Finally, if we want the algorithm to be oblivious to the budget distribution, we can simply discretize the domain of perturbed budgets and apply above-mentioned trick for the budgets in the discretized domain.

\paragraph{Parallel algorithm.}
Suppose the budget is $k_i$. The parallel algorithm is similar to the MapReduce algorithm. It runs in $t$ rounds and maintains a solution set $G$. In each round, it adds to $G$ a set of $\frac{k_i}{t}$ elements with total marginal gain above the threshold $\frac{1-\epsilon}{t}(f(O_{k_i})-f(G))$. Specifically, the algorithm first selects a candidate set $X$ by iteratively discarding from $X$ all the elements that have marginal contribution roughly below $\frac{1-\epsilon}{k_i}(f(O_{k_i})-f(G))$ with respect to the union between $G$ and a random subset of size $\frac{k}{t}$ of $X$ until the expected total marginal gain of a random size-$\frac{k_i}{t}$ subset of $X$ achieves the threshold, and then it samples a set of size $\frac{k_i}{t}$ from $X$ and adds it to $G$. Each round terminates quickly because if the expected total marginal gain of a random subset is low in one iteration, then there should be many elements with low marginal contribution, and they will be discarded together in this iteration.

In order to apply our analysis, we can adapt the algorithm similarly to what we did for the MapReduce algorithm, \ie, we set the threshold as the largest of $\frac{1-\epsilon}{k_j}(f(O_{k_j})-f(G))$ for all $j$.

\end{document}